\documentclass[journal]{IEEEtran}
\usepackage{color}
\usepackage{graphicx}
\usepackage{algorithm}
\usepackage{algorithmic}
\usepackage[small,bf]{caption}
\usepackage{epsfig}
\usepackage{epstopdf}  %{--nocompress --gsopt="-dPDFSETTINGS=/prepress -dSubsetFonts=true -dEmbedAllFonts=true" input.eps}
\usepackage{amsfonts}
\usepackage{latexsym,amssymb,amsmath}
%\usepackage{setspace}
%\singlespacing
%\usepackage{showkeys}
%\geometry{top=1.0in, bottom=1.0in, left=1.0in, right=1.0in}
%\setlength\headheight{-.25in}
%\setlength\topmargin{-.5in}
%\setlength{\parindent}{0pt}
%\setlength{\parskip}{2ex}
%\setlength{\textheight}{9.5in}
%\setlength{\oddsidemargin}{-.25in}
%\setlength{\textwidth}{7in}
%\pagestyle{empty}
%\usepackage{amsmath}
%\usepackage[T1]{fontenc}
%\usepackage{natbib}
%\usepackage{algorithmicx}
%\usepackage[ruled]{algorithm}
%\usepackage{algorithm}
%\usepackage{algpseudocode}
\newtheorem{theorem}{Theorem}
\newtheorem{lemma}{Lemma}
\newtheorem{corollary}{Corollary}

\newenvironment{proof}{\noindent{\bf Proof.}}{\hfill \qed \vskip 5pt}
\def\qed{\hfill\rule{2mm}{2mm}}

\begin{document}
\title{Power Strip Packing of Malleable Demands in Smart Grid}
%\author{Mohammad M. Karbasioun, Gennady Shaikhet, Evangelos Kranakis, Ioannis Lambadaris\\
%			Carleton University, Ottawa, ON, Canada\\
%			
%			E-mails: \{mkarbasi@sce, gennady@math, kranakis@scs, Ioannis@sce\}.carleton.ca}
\author{
    \IEEEauthorblockN{Mohammad M. Karbasioun\IEEEauthorrefmark{1}, Gennady Shaikhet\IEEEauthorrefmark{2}, Evangelos Kranakis\IEEEauthorrefmark{3}, Ioannis Lambadaris\IEEEauthorrefmark{1}}\\
    Carleton University, Ottawa, ON, Canada\\
	\IEEEauthorblockA{\IEEEauthorrefmark{1}Department of Systems and Computer Engineering
         \{mkarbasi, ioannis\}@sce.carleton.ca}\\
    \IEEEauthorblockA{\IEEEauthorrefmark{2}School of Mathematics and Statistics
         \{gennady\}@math.carleton.ca}\\
    \IEEEauthorblockA{\IEEEauthorrefmark{3}School of Computer Science
         \{kranakis\}@scs.carleton.ca} %\vspace{-4mm}
}			
\date{}                                           % Activate to display a given date or no date
%
%\begin{document}
\maketitle
%
%\begin{abstract}
%Motivated from the smart grid we introduce the power assignment problem for the set $\mathcal{N} $ composed of $n$ consumers, each with given energy demands $A_i$ and flexible consumption durations, $s_i$ which can be stretched and squeezed in the interval $[\ell,r]$. Our goal is to minimize the maximum power needed to serve these demands in the serving time interval $[0,1]$. Since the resulting problem is very related to the strip packing problem, we call it Power Strip Packing, where each demand is represented as a rectangle with area $A_i$ and width $s_i$.
%Our main results are as follows: There is a linear time algorithm $\mathcal{P}$ which produces a packing of $\mathcal{N} $ in a strip of width $1$ and maximum power (height) at most: $ P_{opt}(\mathcal{N}) + \dfrac{A_{max}}{\ell} $ (where $ A_{max}= \max_i \{ A_i \} $). Also by using the concept of Fractional Strip Packing, we will show that $P_{opt} \in [\dfrac{A}{Z^*}, \dfrac{A}{Z^*} + \dfrac{A_{max}}{\ell} ]  $, where $P_{opt}$ is the optimum (minimum) value for maximum power
%%, () will cannot be less than $\dfrac{A}{Z^*} $ where 
%and $A$ is the total areas of the demands and $Z^*$ is the largest achievable value obtained from combining different time durations in the interval $[0,1]$. The resulting algorithm is linear in time and asymptotically optimum.
%\end{abstract}
%
%\vspace{-2mm}
\begin{abstract}
We consider a problem of supplying electricity to a set of $\mathcal{N}$ customers in a smart-grid framework.
Each customer requires a certain amount of electrical energy which has to be supplied during the time interval $[0,1]$. We assume that each demand has 
to be supplied without interruption, with possible duration between $\ell$ and $r$, which are given system parameters ($\ell\le r$). At each moment of time,
the power of the grid is the sum of all the consumption rates for the demands being supplied at that moment. Our goal is
to find an assignment that minimizes the {\it power  peak} - maximal power over $[0,1]$ - while satisfying all the demands. To do this first we find the lower bound of optimal power  peak. We show that the problem depends on whether or not the pair $\ell, r$ belongs to a "good" region $\mathcal{G}$. If it does -  then an optimal assignment almost perfectly "fills" the rectangle $time \times power = [0,1] \times [0, A]$ with $A$ being the sum of all the energy demands - thus achieving an optimal power  peak $A$. Conversely, if $\ell, r$ do not  belong to $\mathcal{G}$, we identify the lower bound  $\overline{A} >A$ on the optimal value of power  peak
and introduce a simple linear time algorithm that almost perfectly arranges all the demands in a rectangle $[0, A /\overline{A}] \times [0, \overline{A}]$ and show that it is asymptotically optimal.
\end{abstract}
%\vspace{-4mm}
\section{Introduction}\label{SecIntroduction}
\subsection{Motivations}
One of the main goals of Demand Side Management (DSM) in smart grid is to reduce the peak to average ratio (PAR) of the total consumed power of the network \cite{GridWise}. It has several benefits for the grid such as reducing the amount of additional supplementary power needed to satisfy the demand during peak hours which itself results in decreasing the $CO_2$ emissions of power plants. Also reducing the PAR decreases the possibility of power outage due to sudden increase of demands. 
%Furthermore reducing the amount of supplementary power results in decreasing the $CO_2$ emissions of power plants. 
Furthermore with the expected presence of plug-in hybrid electric vehicles (PHEVs) to the market, finding a proper scheduling of the electric demands becomes even more crucial, since during charging hours of PHEVs the average household load is expected to be doubled \cite{GridFuture}. Also by increasing the number of PHEVs, it is vital to design and deploy suitable charging stations with proper scheduling scheme for serving the customers \cite{ChargingStationConf}. Therefore finding a proper scheduling algorithm resulting in a suitable PAR is very important in the smart grid. Different approaches and schemes are defined in this field, such as direct load control (DLC) \cite{DLC} (in which the utility can directly control the energy consumption of some loads in customer side) and real time pricing (RTP) \cite{RTP} (which propose different pricing schemes during a day). In addition in \cite{RadWung2010} authors propose a distributed algorithm using game theory. In their model the power network consists of demands each with its own energy demand and also with its own minimum and maximum acceptable power level and should be scheduled in their own requested time intervals. Even though the goal of their algorithm is reducing the total convex cost of the network, they still show that their algorithm can be used to reduce the PAR in the network.
% Unfortunately the complexity of their algorithm is relatively high, especially with large number of demands in the network. 
The problem in \cite{RadWung2010} is \emph{off-line} Scheduling of demands, since complete knowledge of the demands, such as number of demands and the amount of energy needed by each of them are known in advance. Other scenarios can happen when we don't have complete knowledge of the demands and hence we need to perform \emph{on-line} scheduling. For example in \cite{Incentive} authors try to find some algorithms for different levels of knowledge about the demands such as arrival times, durations and power intensities. Also in \cite{Tassiulas} and \cite{ElasticDemands} it is attempted to devise on-line scheduling for the random arrival of demands. In this paper we focus on off-line scheduling with a complete knowledge of the demands prior to performing scheduling. 
%\vspace{-2mm}
\subsection{Model and Related Works}
Consider a set $\mathcal{N} = \{1,2,...,n\} $ of energy demands $\{A_i, i \in \mathcal{N}\}$, needed to be scheduled in a finite time interval $[0,1]$. We assume that only "rectangular" shape of scheduling is permitted, meaning that each demand $i \in \mathcal{N}$ has to be supplied without interruption in some interval $[\tau_i, \tau_i+s_i]$ with a constant power intensity $d_i = \frac{A_i}{s_i}$.  Obviously, 
\begin{align}\label{1}
0\le \tau_i \le \tau_i+s_i\le 1, \qquad i\in \mathcal{N}. 
\end{align}
In addition to \eqref{1}, we impose a {\em demand malleability} constraint. That is, we assume the power system has parameters $\ell$ and $r$ with $0\le \ell \le r\le 1$ so that  
\begin{align}\label{2}
\ell \le s_i \le r, \qquad i\in \mathcal{N}. 
\end{align}
 This is motivated by existence electric appliances with flexibility on the charging rate. As a special application we may think of a PHEV parking lot with charging facilities where all customers either are known a priori or should register before parking their vehicles. 

A set of pairs $\pi=\{(\tau_i, s_i), i\in \mathcal{N}\}$, satisfying \eqref{1}--\eqref{2} for given parameters $(\ell, r)$, will be called a {\it scheduling policy}. Let $\Pi=\Pi^{(\ell,r)}$ be a set of all policies.

For a policy $\pi\in \Pi$, we define its maximal power as  
\begin{align}\label{eqnMaxPower}
 P_{max}^{\pi} =\max_{t\in [0,1]} \{ P^{\pi}(t)\}, 
\end{align} 
where, for  $(\tau_i,s_i)\in \pi$,
$$P^{\pi}(t)=\sum_{i=1}^{n}\left( \dfrac{A_i}{s_i} \cdot 1_{ \{ \tau_i \leq t \leq \tau_i + s_i \} } \right), \;\; 0\le t\le 1.$$ 
We are interested in finding a scheduling that minimizes the peak to average ratio in the power grid, i.e., in finding 
\begin{align}\label{3}
P_{opt} = \inf_{\pi\in\Pi}P_{max}^{\pi}.
\end{align}
Note, that $P_{opt}$, although it is not stated explicitly, depends on the parameters $\ell$ and $r$.

%\vspace{-2mm}
\subsection{Related literature}\label{SubRelatedLitrature}
Our setting resembles a so-called strip-packing problem \cite{Coeffman1980} and \cite{kenyon2000}. Indeed by viewing the demands as rectangles,
we want to pack them with their side $s_i$ parallel
to horizontal axis in a rectangular bin of width $[0,1]\times P_{opt}$
where an optimal height $P_{opt}$ is unknown. The problem is known to be 
NP complete (see \cite{lodi2002}) and therefore
an optimal height cannot always be computed
in polynomial time. 

In the traditional strip packing (TSP) problem, however, the height at any time t ($H^{\pi} (t) $) for a scheduling policy $\pi$ is defined as the {\it uppermost boundary} of scheduled rectangles at time $t$, 
%the maximal height $H^{\pi}_{max}$, for a scheduling policy $\pi$, is defined as the {\it uppermost boundary} of scheduled rectangles, 
while in our model (power strip packing - PSP) the height of the strip packing is obtained from Equation (\ref{eqnMaxPower}). This difference arises from the nature of the electric power, in which the overall height (i.e. power) at any given time is the sum of the scheduled (i.e. active) demands (See Figure \ref{fig:PSPvsTSP}). 
Naturally, for any policy $\pi$,  $P^{\pi} (t)\le H^{\pi}(t)$  at any time $t$ and hence $P^{\pi}_{max}\le H^{\pi}_{max}$.
% $P^{\pi}_{max}\le H^{\pi}_{max}$ for any policy $\pi$.

%
%
%In the next sections we will show that our algorithm and analysis work for both Traditional Strip Packing and Power Strip Packing. (In the sequel we denote traditional strip packing and power strip packing with $TSP$ and $PSP$ respectively). Let  and their corresponding optimal solutions with $H_{opt}$ and $P_{opt}$ respectively. Obviously $A \leq P_{opt} \leq H_{opt}$).
\begin{figure}[b]
\centering
\includegraphics[width=.35 \textwidth ] {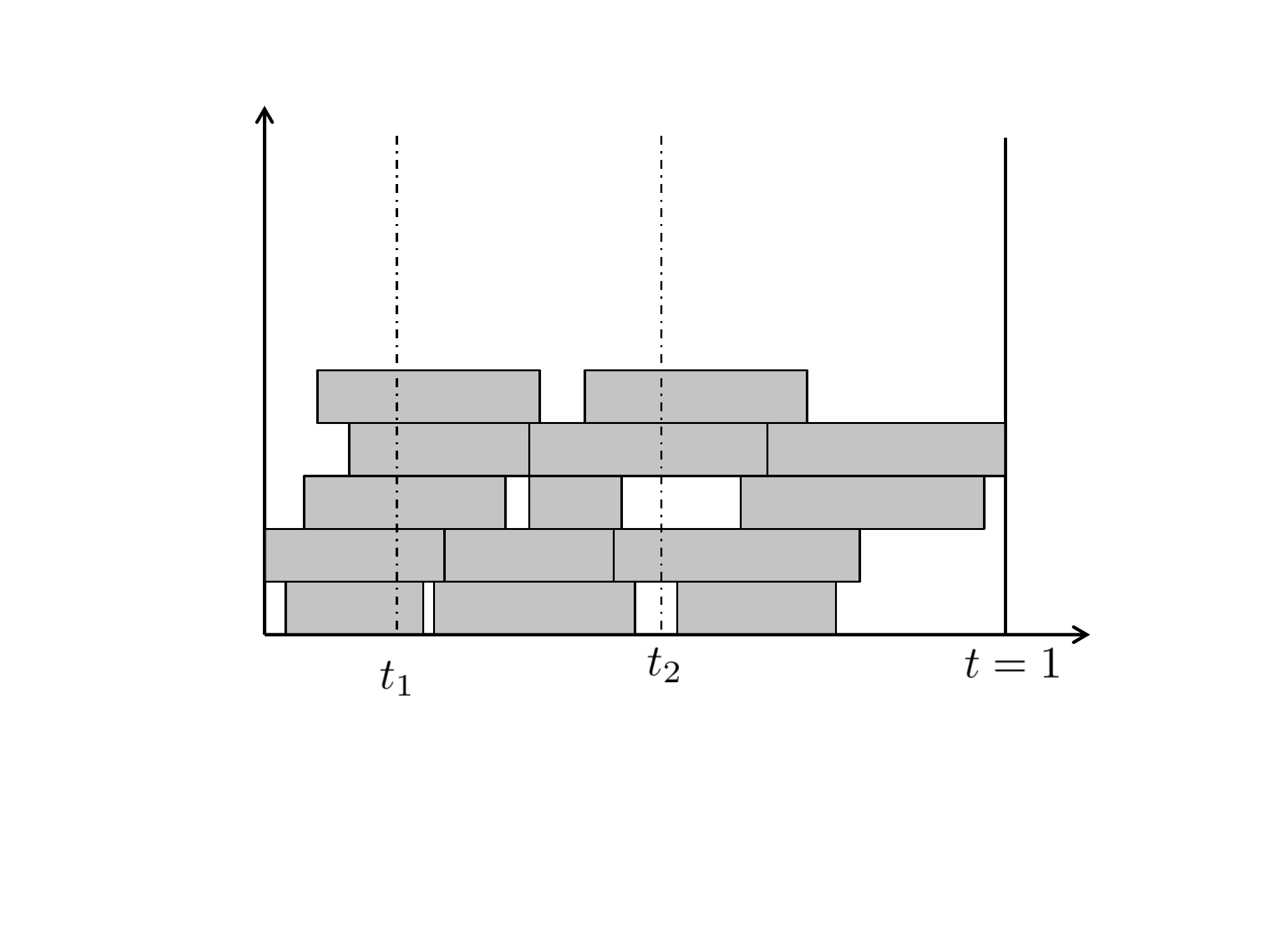} \vspace{-2mm}
\caption{Different interpretations of power and height in PSP and TSP.  For simplicity the height of each rectangle is assumed to be 1. In PSP: $P(t_1) = 5 $ and $P(t_2) = 3 $ where in TSP $H(t_1) = H(t_2) = 5 $.}\vspace{-4mm}
\label{fig:PSPvsTSP}
\end{figure}
%
%For a given set of demands, $ \mathcal{N}$, let $P_{opt}(\mathcal{N})$ and $\mathcal{P}(\mathcal{N})$  denote the optimum solutions of $PSP$ and the output of the algorithm $\mathcal{P}$ respectively. The absolute performance ratio of $\mathcal{P}$ is defined as $\sup_{\mathcal{N}} \mathcal{P}(\mathcal{N}) /  P_{opt}(\mathcal{N}) $ and the asymptotic performance ratio of $\mathcal{P}$ is defined as $\lim_{P_{opt}(\mathcal{N}) \rightarrow  \infty }  \sup_{\mathcal{N}} \mathcal{P}(\mathcal{N}) /  P_{opt}(\mathcal{N})$. (These definitions are also valid for the case of $TSP$. In that case just substitute $ PSP$ with $TSP$ and also $P_{opt}(\mathcal{N})$ with $H_{opt}(\mathcal{N})$). 

For traditional strip packing problem, most commonly referred work is \cite{kenyon2000}, which introduces an algorithm AFPTAS, whose performance $AFPTAS(\mathcal{N})$ satisfies $AFPTAS(\mathcal{N}) \leq  (1+\epsilon)H_{opt} +  O(1 /\epsilon^2)$. The algorithm is polynomial in both $n$ (the number of demands) and $(1 /\epsilon)$ where the running time is $ O(n \log n + \log^{3}{n \epsilon^{-6}} ~\log^{3}{\epsilon^{-1}} ) $. Using AFPTAS, Jansen \cite{MalleableJansen} generalized the setting of \cite{kenyon2000} to \emph{"Malleable Tasks"} where each task could use different number of resources (e.g. processor, memories,...) which can also alternate their service times. Both algorithms, especially the one in \cite{MalleableJansen}, are complex. In our case, however, a much simpler, linear-time algorithm is used to obtain a scheduling policy $\pi$ with a performance $P^{\pi}_{max}$ satisfying
\begin{align}\label{5}
P^{\pi}_{max}\le P_{opt} + \frac{A_{max}}{\ell},\quad A_{max}=\max_{i\in \mathcal{N}} A_i;
\end{align}
which, using the language of \cite{kenyon2000} and \cite{MalleableJansen}, corresponds to performance ratio being exactly $1$ (as an opposite to $1+\epsilon$).

In next section we will present Theorem \ref{ThmMain} which is the main result of the paper. Then in Sections \ref{SecAnalysis} and \ref{SecAlgorithm} we analyse the problem to find proper lower bound and upper bound for $P_{opt} $. Then based on theses results in Section \ref{SecProofThm1} we will prove Theorem \ref{ThmMain}. Finally in Section \ref{SecSimulation} we will present some computational results.
%%%%%%%%%%%%%%%%%%%%%%%%%%%%%%%%%%%%%%%%%%%%%%%%%%%%%%%%%%%%%%%%%%%%%%%%%%%%%%%%%%%%%%%%%%%%%%%%%%%%%%%%%%%%%%%%%%%%%%%%%%%%%%%%%%%%%%%%%%%%%%%%%%%%%%%%%%%%%%%%%%%%%
%%%%%%%%%%%%%%%%%%%%%%%%%%%%%%%%%%%%%%%%%%%%%%%%%%%%%%%%%%%%%%%%%%%%%%%%%%%%%%%%%%%%%%%%%%%%%%%%%%%%%%%%%%%%%%%%%%%%%%%%%%%%%%%%%%%%%%%%%%%%%%%%%%%%%%%%%%%%%%%%%%%%%
%%%%%%%%%%%%%%%%%%%%%%%%%%%%%%%%%%%%%%%%%%%%%%%%%%%%%%%%%%%%%%%%%%%%%%%%%%%%%%%%%%%%%%%%%%%%%%%%%%%%%%%%%%%%%%%%%%%%%%%%%%%%%%%%%%%%%%%%%%%%%%%%%%%%%%%%%%%%%%%%%%%%%
\vspace{-1mm}
\section{Main results}\label{SecMainResult}
We start with some preparatory work. Let $\ell$ and $r$ be fixed. A real number $w>0$ is called \textbf{achievable}, if it can be represented as an integer combination
of numbers from the interval $[\ell,r]$. That is, if there exist an integer number $q$ and a set of positive real values: $s_1,...,s_q \in [\ell, r]$ such that $ \sum_{i=1}^q s_i =w $.
 \begin{lemma}\label{LemmaWachievable}
A value $w$ is achievable if and only if  $\left\lceil \frac{w}{r} \right\rceil \leq  \frac{w}{\ell}  $.
%\begin{align}\label{eqnWachievable}
%\left\lceil \frac{w}{r} \right\rceil \leq  \frac{w}{\ell} 
%\end{align}
\end{lemma}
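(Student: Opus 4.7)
The plan is to reduce achievability to the existence of an integer $q$ in a certain interval, and then observe that the stated condition is precisely the non-emptiness of that interval.

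For the forward direction, suppose $w$ is achievable, so that $w=\sum_{i=1}^q s_i$ with each $s_i\in[\ell,r]$. Summing the bounds $\ell\le s_i\le r$ immediately gives $q\ell\le w\le qr$, hence $w/r\le q\le w/\ell$. Since $q$ is an integer, this yields $\lceil w/r\rceil\le q\le w/\ell$, in particular $\lceil w/r\rceil\le w/\ell$.

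For the converse, assume $\lceil w/r\rceil\le w/\ell$ and simply set $q=\lceil w/r\rceil$. Then $q\ge w/r$ gives $qr\ge w$, while the hypothesis gives $q\le w/\ell$, i.e.\ $q\ell\le w$. So $w/q\in[\ell,r]$, and choosing $s_1=\cdots=s_q=w/q$ produces a valid decomposition, proving $w$ achievable.

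The main obstacle is essentially conceptual rather than technical: one must recognise that the only freedom in representing $w$ is the choice of the number of summands $q$, because once $q$ is fixed the uniform choice $s_i=w/q$ is the simplest candidate and works whenever the average $w/q$ lies in $[\ell,r]$. The condition $\lceil w/r\rceil\le w/\ell$ is exactly the statement that the integer interval of admissible values of $q$ is non-empty, which makes both directions short one-line arguments.
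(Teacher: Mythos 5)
Your proof is correct and follows essentially the same route as the paper's: both reduce achievability to the existence of an integer $q$ with $w/r \le q \le w/\ell$ (via the averaging/uniform choice $s_i = w/q$) and observe that this interval contains an integer exactly when $\left\lceil w/r \right\rceil \le w/\ell$. Your version is, if anything, slightly more explicit in separating the two directions and in exhibiting the witness $q=\left\lceil w/r\right\rceil$.
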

%
%\vspace{.5cm}
%
The proof is given in Appendix \ref{ApxLemmaWachievable}.

Lemma \ref{LemmaWachievable} implies that the point $w=1$ is achievable if $\left\lceil \frac{1}{r} \right\rceil \leq  \frac{1}{\ell}  $.
%\begin{align}
%\label{eqnCondZ^*=1}
% \left\lceil \dfrac{1}{r} \right\rceil \leq  \dfrac{1}{\ell} . 
%\end{align}
Define a {\bf good region} to be the set of all pairs $(\ell, r)$ that makes $w=1$ an achievable point (see Fig. \ref{fig:GoodRegion}):
\begin{align}
%\label{good}
\mathcal{G}=\{(\ell, r),\; \ell\le r\; \textrm{and $1$ is achievable} \}.\nonumber
\end{align}

If $w$ is not achievable, we define {\it its largest achievable point} $w^*=w^*(w)$ as:
 \begin{align}
 %\label{closest}
w^* = \sup \{v\;: v < w,\; \textrm {such that $v$ is achievable}\} \nonumber
 \end{align}

%\vspace{-4mm}
\begin{figure}[tp]
\centering
\includegraphics[width=.35 \textwidth] {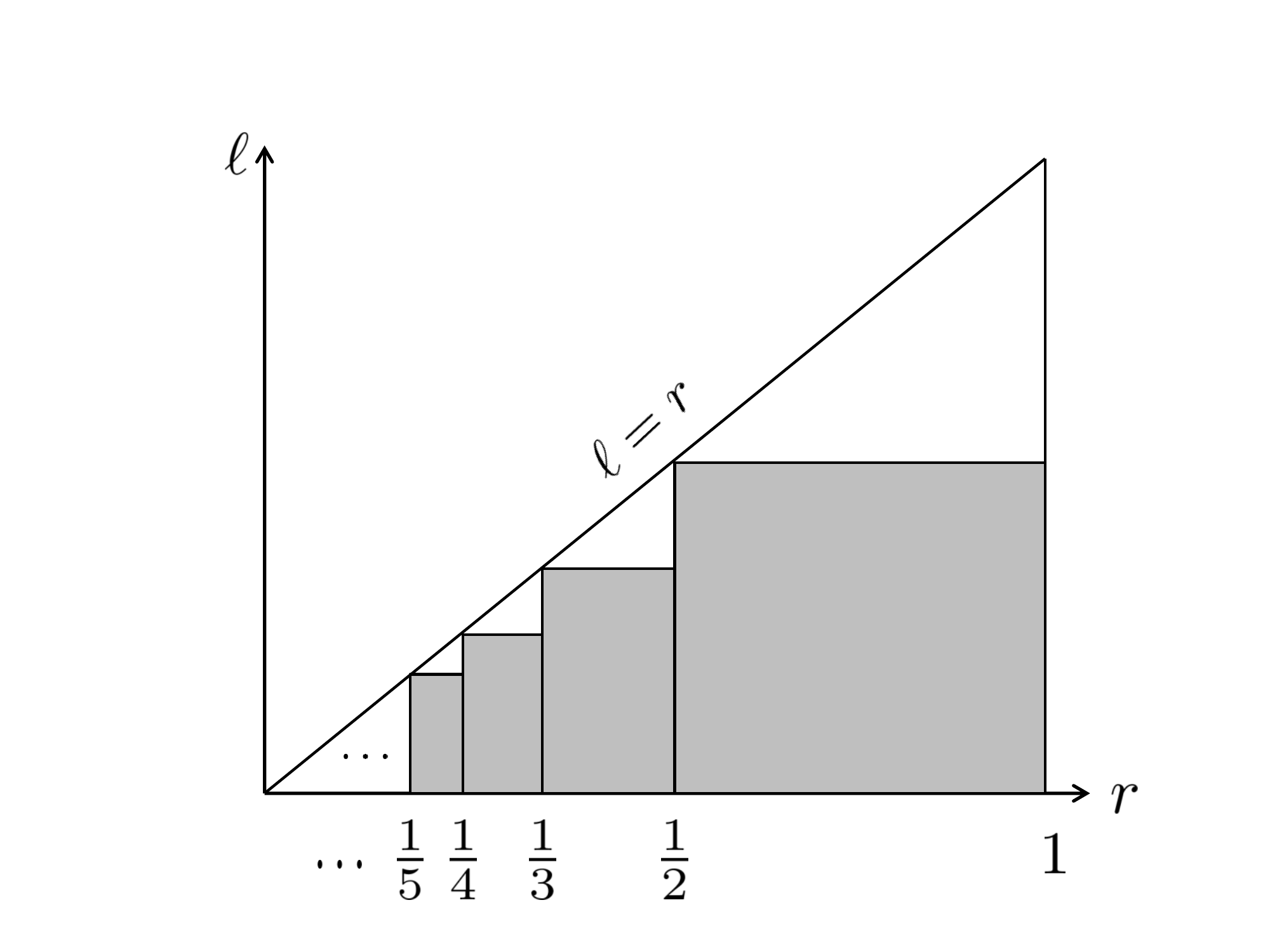}
\caption{Good region vs. Bad region. The Good region is shown with shadow.}\vspace{-4mm}
\label{fig:GoodRegion}
\end{figure}
%
%\vspace{.5cm} 

%%%%%%%%%%%%%%%%%%%%%%%%%%%%%%%%%%%%%%%%%%%%%%%%%%%%%%%%%%%%%%%%%%%%%%%%%%%%%%%%%%%%%%%%%%%%%%%%%%%%%%%%%%%%%%%%%%%%%%%%%%%%%%%%%%%%%%%%%%%%%%%%%%%%%%%%%%%%%%%%%%%%%
%%%%%%%%%%%%%%%%%%%%%%%%%%%%%%%%%%%%%%%%%%%%%%%%%%%%%%%%%%%%%%%%%%%%%%%%%%%%%%%%%%%%%%%%%%%%%%%%%%%%%%%%%%%%%%%%%%%%%%%%%%%%%%%%%%%%%%%%%%%%%%%%%%%%%%%%%%%%%%%%%%%%%
%%%%%%%%%%%%%%%%%%%%%%%%%%%%%%%%%%%%%%%%%%%%%%%%%%%%%%%%%%%%%%%%%%%%%%%%%%%%%%%%%%%%%%%%%%%%%%%%%%%%%%%%%%%%%%%%%%%%%%%%%%%%%%%%%%%%%%%%%%%%%%%%%%%%%%%%%%%%%%%%%%%%%
\begin{lemma}\label{LemmaWclosest}
 If $w$ is not achievable, then $w^*(w) = r \cdot \left\lfloor \frac{w}{r} \right\rfloor$.
\end{lemma}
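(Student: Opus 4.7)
The plan is to identify the candidate $v := r\lfloor w/r\rfloor$, verify it is itself achievable and strictly below $w$, and then rule out any achievable value in the open interval $(v,w)$ by invoking the characterization from Lemma \ref{LemmaWachievable} applied to $w$. The whole argument is a short case analysis that rests on the equivalence "$w$ not achievable" $\Longleftrightarrow$ $\lceil w/r\rceil > w/\ell$.

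First I would dispose of trivial ranges. If $\ell \le w < r$, then $w$ itself is achievable (take $q=1$, $s_1 = w \in [\ell,r]$), contradicting the hypothesis; so $w \ge r$ and in particular $\lfloor w/r\rfloor \ge 1$. Moreover $w$ cannot be an integer multiple of $r$, for otherwise $w = r\lfloor w/r\rfloor$ would be achievable as a sum of $\lfloor w/r\rfloor$ copies of $r$. Consequently
\[
\left\lceil \frac{w}{r}\right\rceil = \left\lfloor \frac{w}{r}\right\rfloor + 1.
\]

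Next I would verify the candidate. Setting $v := r\lfloor w/r\rfloor$, we have $v<w$ (by the previous observation that $w$ is not a multiple of $r$), and $v$ is achievable by taking $q=\lfloor w/r\rfloor$ equal summands $s_1=\cdots=s_q=r\in [\ell,r]$. Hence $w^*(w)\ge v$.

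Finally I would show there is no achievable value strictly between $v$ and $w$, which is the only real step. Assume for contradiction that $v'=\sum_{i=1}^{q'} s'_i$ is achievable with $s'_i \in [\ell,r]$ and $v<v'<w$. From $q' r \ge v' > v = r\lfloor w/r\rfloor$ we get $q' \ge \lfloor w/r\rfloor + 1 = \lceil w/r \rceil$. From $q' \ell \le v' < w$ we get $q' < w/\ell$. Combining,
\[
\left\lceil \frac{w}{r}\right\rceil \le q' < \frac{w}{\ell},
\]
which contradicts Lemma \ref{LemmaWachievable} applied to the non-achievable point $w$, namely $\lceil w/r\rceil > w/\ell$. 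Therefore no achievable value lies in $(v,w)$, giving $w^*(w)=v=r\lfloor w/r\rfloor$.

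The main (and really only) obstacle is choosing to apply Lemma \ref{LemmaWachievable} to $w$ itself rather than to the putative $v'$; once that is done, the bounds $q'\ge\lceil w/r\rceil$ and $q'<w/\ell$ collapse the gap and produce the contradiction. Edge cases ($w<\ell$, or $w$ a multiple of $r$) are handled at the outset and never actually arise under the hypothesis.
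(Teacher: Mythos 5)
Your argument is correct and follows essentially the same route as the paper's: assume an achievable $v'$ with $r\lfloor w/r\rfloor < v' < w$, bound its number of summands from both sides to get $\lceil w/r\rceil \le q' < w/\ell$, and contradict the characterization of non-achievability from Lemma \ref{LemmaWachievable}. The only substantive addition is your explicit check that $r\lfloor w/r\rfloor$ is itself achievable (as $\lfloor w/r\rfloor$ copies of $r$), a step the paper's proof leaves implicit but which is genuinely needed for the supremum to equal that value.
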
 
The proof is given in Appendix \ref{ApxLemmaWclosest}.

\vspace{.24cm}

 Using Lemma \ref{LemmaWclosest}, if $(\ell,r) \notin \mathcal{G}$, then $Z^*=Z^*(1) = r \cdot \lfloor \frac{1}{r} \rfloor $. 
%\begin{align}\label{eqnZvalue}
%&Z^* = r \cdot \left\lfloor \dfrac{1}{r} \right\rfloor
%\end{align}
We are ready to state the main result of the paper.
%\vspace{.5cm}
%
\vspace{.24cm}

\begin{theorem}\label{ThmMain} Let $A=\sum_{i=1}^n A_i$. Then $P_{opt}\in \left[\overline{A}, \; \overline{A}+\frac{A_{max}}{\ell}\right]$, where  
\begin{align}\label{6}
\overline{A} = \begin{cases} A &\;\;,\;\;   \textrm{if $(\ell, r)\in \mathcal{G}$},\\ \dfrac{A}{Z^*}&\;\;,\;\; \textrm{if $(\ell, r)\notin \mathcal{G}$}. \end{cases}
\end{align}
\end{theorem}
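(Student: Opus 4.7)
The plan is to verify $\overline{A}\le P_{opt}\le\overline{A}+A_{\max}/\ell$ by establishing the two inequalities separately, which correspond to Sections \ref{SecAnalysis} (lower bound) and \ref{SecAlgorithm} (algorithmic upper bound), and are then glued together in Section \ref{SecProofThm1}.

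For the upper bound I would exhibit a linear-time greedy algorithm. Set $w=1$ when $(\ell,r)\in\mathcal{G}$ and $w=Z^*$ otherwise; by Lemma \ref{LemmaWachievable} either choice is achievable, so we can write $w=c_1+\cdots+c_K$ with $c_j\in[\ell,r]$, giving $K$ column intervals that partition $[0,w]$. Because $w\cdot\overline{A}=A$, the rectangle $[0,w]\times[0,\overline{A}]$ has exactly the right area to hold all demands. Assign demands to columns by a longest-processing-time-first greedy rule (process in decreasing order of $A_i$ and place each in the column with currently smallest sum of rates), and fix each $s_i$ to equal the width of its column. Stacking vertically inside each column produces a feasible schedule whose power on column $j$ is the sum of the rates of the demands placed there; a standard load-balancing argument bounds that sum by the target $\overline{A}$ plus the rate of a single demand, which is at most $A_{\max}/\ell$ since $s_i\ge\ell$. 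Hence $P^\pi_{\max}\le\overline{A}+A_{\max}/\ell$.

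For the lower bound in the good case $\overline{A}=A$, and the inequality is immediate from the energy identity $A=\sum_i d_i s_i=\int_0^1 P^\pi(t)\,dt\le P^\pi_{\max}$. In the bad case the crucial consequence of Lemma \ref{LemmaWachievable} is that no finite sum of numbers in $[\ell,r]$ can land in $(Z^*,1]$. Applied to any policy $\pi$, this forces either (A) $\sum_i s_i\le Z^*$ or (B) $\sum_i s_i>1$. In sub-case (A), the support of $P^\pi$ has measure at most $\sum_i s_i\le Z^*$, so averaging gives $A\le P^\pi_{\max}\cdot Z^*$, i.e.\ $P^\pi_{\max}\ge\overline{A}$.

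Sub-case (B) is the main obstacle, since there the support of $P^\pi$ can cover all of $[0,1]$ and the plain averaging bound only yields $P^\pi_{\max}\ge A$. My approach would be to decompose the demands into $\chi=\max_t k(t)$ non-overlapping colour classes via interval-graph colouring and then reapply the non-achievability observation class-by-class: each colour class consists of disjoint subintervals of $[0,1]$ whose total width is itself an integer combination of $[\ell,r]$ in $[0,1]$, hence at most $Z^*$. Summing over classes, $\sum_i s_i\le\chi Z^*$, giving a concurrency lower bound $\chi\ge\sum_i s_i/Z^*$. Coupling this concurrency bound with the identity $\int P^\pi=A$ via a weighted average across colour classes (to accommodate the non-uniform rates $d_i=A_i/s_i$) should then deliver $P^\pi_{\max}\ge A/Z^*$; this coupling, in particular accounting for energies rather than bare task counts, is the delicate step, since a naive argument only produces the concurrency bound without a matching rate. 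Combining the two halves finally proves Theorem \ref{ThmMain}.
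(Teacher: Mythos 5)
Your upper bound is sound and is essentially the paper's construction: partition the achievable width ($1$ or $Z^*$) into $K_0$ columns of widths in $[\ell,r]$, pin each $s_i$ to its column width, and balance loads so that each column's rate sum exceeds the target $\overline{A}=A/w$ by at most one demand's rate $A_i/s_i\le A_{\max}/\ell$ (the paper fills slots sequentially to the threshold $A/Z^*$; your min-load greedy is the variant it mentions in Section~\ref{SecSimulation}). The lower bound in the good region and in your sub-case (A) is also fine.

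The problem is sub-case (B), which is precisely the substance of the theorem when $(\ell,r)\notin\mathcal{G}$, and your argument there has a genuine hole that you yourself flag but do not close. The colouring step only yields the concurrency bound $\chi=\max_t k(t)\ge \sum_i s_i/Z^*$, and concurrency does not control power: the $\chi$ demands active at the witnessing time may all have small rates $d_i=A_i/s_i$, while the per-class energy estimate $E_k\le Z^*\max_{i\in C_k}d_i$ only gives $\sum_k\max_{i\in C_k}d_i\ge A/Z^*$ with no guarantee that those maximizers are ever simultaneously active. The "weighted average across colour classes" is not an argument, and I do not see how to make it one at the level of the original demands. The paper avoids this obstacle by first relaxing to fractional strip packing ($P_{opt}\ge P_{opt}^{F}$, inequality \eqref{FSP_P}) and cutting every demand into horizontal slices of one common infinitesimal height $\delta$, so that power becomes exactly $\delta$ times concurrency and the rate non-uniformity disappears; it then reduces to "fillings" (Lemma~\ref{LemmaFillingOpt}) and shows each row holds exactly $K_0$ slices covering at most $Z^*$ of the time axis with $K_0$ common gap-free intervals (Lemmas~\ref{LemmaKrectangles} and~\ref{LemmaKnoGaps}), forcing $P_{opt}^{F}\ge A/Z^*$. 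Notably, your colouring idea would close the gap cleanly if applied \emph{after} that uniform slicing -- each colour class of $\delta$-slices has total width achievable and at most $1$, hence at most $Z^*$, so $\max_t P = \delta\chi\ge A/Z^*$ -- and would arguably be tidier than the filling lemmas; but as written, applied to the unsliced demands, the proof of the lower bound is incomplete.
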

 As the number of demands $n$ grows, the value of $\overline{A}$ grows as well, thus making the interval $\left[\overline{A}, \; \overline{A}+\frac{A_{max}}{\ell}\right]$ tight. Therefore, every scheduling policy $\pi$, whose corresponding value $P_{\max}^{\pi}$ belongs to that interval, will  be considered asymptotically optimal. In the last section two of such policies will be introduced.
%if $(\ell, r)\notin \mathcal{G}$, then all demands can be satisfied in a shorter 
%time interval $[0, Z^*]$.

%\vspace{.5cm} {\bf Discussion .....}
%\begin{theorem}\label{ThmMain}
%Given set $\mathcal{N} $ composed of $n$ rectangular demands with areas $A_i  ~~i=1,2,\ldots,n$ ($ A_{max}= \max_i \{ A_i \}$) and scheduling bounds $\ell, r \leq 1$, such that 
%$~~\ell \leq s_i \leq r$, where $ s_i$ is the (time) width of the $i^{th} $ demand such that $ A_i = s_i \times \dfrac{A_i}{s_i} $. There exists a linear time algorithm $PSP$ which produces a packing of $\mathcal{N} $ in a strip of width $1$ and maximum power (height) $ PSP(\mathcal{N}) $ such that:%\footnote{To be consistent with the literature we assume that $ A_i \leq \ell , ~~i=1,2,\ldots,n $, Therefore $d_i=\frac{A_i}{s_i} $ will also satisfy $d_i \leq 1 ,  ~~i=1,2,\ldots,n$.}
%$$ 
%PSP(\mathcal{N}) \leq P_{opt}(\mathcal{N}) + \dfrac{A_{max}}{\ell}
%$$
%\end{theorem}
%%%%%%%%%%%%%%%%%%%%%%%%%%%%%%%%%%%%%%%%%%%%%%%%%%%%%%%%%%%%%%%%%%%%%%%%%%%%%%%%%%%%%%%%%%%%%%%%%%%%%%%%%%%%%%%%%%%%%%%%%%%%%%%%%%%%%%%%%%%%%%%%%%%%%%%%%%%%%%%%%%%%%
%%%%%%%%%%%%%%%%%%%%%%%%%%%%%%%%%%%%%%%%%%%%%%%%%%%%%%%%%%%%%%%%%%%%%%%%%%%%%%%%%%%%%%%%%%%%%%%%%%%%%%%%%%%%%%%%%%%%%%%%%%%%%%%%%%%%%%%%%%%%%%%%%%%%%%%%%%%%%%%%%%%%%
%%%%%%%%%%%%%%%%%%%%%%%%%%%%%%%%%%%%%%%%%%%%%%%%%%%%%%%%%%%%%%%%%%%%%%%%%%%%%%%%%%%%%%%%%%%%%%%%%%%%%%%%%%%%%%%%%%%%%%%%%%%%%%%%%%%%%%%%%%%%%%%%%%%%%%%%%%%%%%%%%%%%%
\vspace{-1mm}
\section{Lower bound of $P_{opt} $} \label{SecAnalysis}
Theorem \ref{ThmMain} will be proved in two main steps, each dealing with lower and upper bounds respectively. Here we discuss about lower bound.

\begin{lemma}\label{LemAZ}
\begin{align}\label{IneqClaim}
 P_{opt} \ge \overline{A}.
\end{align}
\end{lemma}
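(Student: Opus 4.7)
The plan is to split into the cases $(\ell,r)\in\mathcal{G}$ and $(\ell,r)\notin\mathcal{G}$, both building on the identity
$$\int_0^1 P^\pi(t)\,dt \;=\; \sum_{i=1}^{n}\frac{A_i}{s_i}\cdot s_i \;=\; A,$$
which follows by swapping sum and integral in the definition of $P^\pi(t)$. Together with the pointwise bound $P^\pi(t)\le P_{max}^\pi$, integrating over $[0,1]$ gives $A\le P_{max}^\pi$. When $(\ell,r)\in\mathcal{G}$ we have $\overline{A}=A$, so this already yields $P_{max}^\pi\ge \overline{A}$ and that case is done.

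For the bad region, where $\overline{A}=A/Z^*$, the same identity needs to be sharpened. The observation is that, by Lemmas~\ref{LemmaWachievable}--\ref{LemmaWclosest}, every achievable value inside $[0,1]$ lies in $[0,Z^*]$: the achievable intervals $[q\ell,qr]$ for $q\le\lfloor 1/r\rfloor$ are all contained in $[0,Z^*]$, while the next one, $[(\lfloor 1/r\rfloor+1)\ell,(\lfloor 1/r\rfloor+1)r]$, sits strictly above $1$ because $(\ell,r)\notin\mathcal{G}$. Consequently, if the demand intervals of $\pi$ happen to be pairwise disjoint, the support $S^\pi$ of $P^\pi$ has measure equal to $\sum_i s_i$, an achievable value at most $1$, and hence $|S^\pi|\le Z^*$. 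Restricting the integration identity to $S^\pi$ then gives $A\le P_{max}^\pi\cdot Z^*$, i.e., the desired bound.

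The main obstacle is extending this to general policies, where overlapping demands may push $|S^\pi|$ up to $1$. My plan is a reduction argument: from $\pi$ with peak $P$, produce a policy $\pi'$ having the same demand areas $A_i$, pairwise disjoint demand intervals contained in some sub-interval of length $\le Z^*$, and peak power $P_{max}^{\pi'}\le P$. The crucial ingredient is that the rescaled pair $(\ell/Z^*,r/Z^*)=(\ell/Z^*,1/\lfloor 1/r\rfloor)$ lies in $\mathcal{G}$, so perfect tiling within a window of length $Z^*$ is available; this mirrors the constructive scheme that will be used in Section~\ref{SecAlgorithm}. Once such a $\pi'$ is obtained, the disjoint-case argument above applies and delivers $A\le P_{max}^{\pi'}\cdot Z^*\le P\cdot Z^*$. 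The hardest part is verifying that such a compression can always be carried out while respecting the individual constraints $s_i\in[\ell,r]$; equivalently, showing that any policy can be deformed into a disjoint, compactly supported one without increasing the peak.
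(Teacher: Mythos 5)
Your treatment of the good region is correct and identical to the paper's: integrating $P^\pi$ over $[0,1]$ gives $A$, so $P_{max}^\pi\ge A=\overline{A}$. The bad region, however, contains a genuine gap, and it sits exactly where you flag it. The ``disjoint case'' you analyze is essentially vacuous: if the demand intervals $[\tau_i,\tau_i+s_i]$ are pairwise disjoint and contained in a window of length at most $Z^*$, then $\sum_i s_i\le Z^*<1$, and since each $s_i\ge\ell$ this forces $n\le Z^*/\ell$. For any larger number of demands no such policy $\pi'$ exists, so the proposed reduction (``deform any $\pi$ into a disjoint, compactly supported $\pi'$ with no larger peak'') is not merely unverified --- it asks for an object that cannot exist. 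The membership of $(\ell/Z^*,r/Z^*)$ in $\mathcal{G}$ does not rescue this: perfect tiling of a window of length $Z^*$ produces \emph{rows stacked on top of each other}, i.e.\ heavily overlapping time intervals, which is precisely the situation your disjoint-support identity $|S^\pi|=\sum_i s_i$ does not cover. The entire difficulty of the lemma is to rule out policies that stagger overlapping demands across all of $[0,1]$ so that the peak stays near $A$ rather than $A/Z^*$, and your argument never engages with that case.

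For comparison, the paper attacks the overlapping case head-on via a fractional relaxation: it slices every demand into infinitesimally thin horizontal strips (so $P_{opt}^F\le P_{opt}$), shows in Lemma~\ref{LemmaFillingOpt} that the fractional optimum is attained by ``fillings,'' and then proves (Lemmas~\ref{LemmaKrectangles} and~\ref{LemmaKnoGaps}) that every row of any filling contains exactly $K_0=\lfloor 1/r\rfloor$ rectangles and, crucially, $K_0$ \emph{identically placed} gap-free intervals $\bigl(1-(K_0-i+1)\ell,\, i\ell\bigr)$ common to all rows. On those intervals every row is simultaneously active, so the PSP peak coincides with the TSP height, and the height is at least $A/Z^*$ because no row can cover more than $Z^*$ of the time axis (Lemma~\ref{LemmaWclosest}). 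That forced-overlap structure is the missing ingredient in your plan; your observation that the covered length of a single row is at most $Z^*$ is the right starting point, but you need an argument of the paper's type (or some other mechanism) to convert it into a statement about the \emph{sum} of intensities at a single time instant.
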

\proof
The Inequality \eqref{IneqClaim} is obvious for $(\ell, r)\in \mathcal{G}$ (making $Z^* =1$ and $\overline{A}=A$).
Therefore, in what follows we assume $(\ell, r)\notin \mathcal{G}$ and hence $Z^* <1$. To prove Lemma \ref{LemAZ} we use the concept of the \emph{Fractional Strip Packing (FSP)} \cite{FSP}, according to which it is allowed to perform horizontal cutting to each rectangle $(s_i,d_i)$ in order to get a set of equal width rectangles $\{ (s_i,d_{i1}),(s_i,d_{i2}) , \ldots , (s_i,d_{iQ_{i}}) \} $ such that $ \sum_{j=1}^{Q_i} d_{ij} =d_i$ for all $i=1,\dots ,n $. Then, instead of packing the original set $\{ (s_i,d_i) ~~~i=1,\ldots,n \}$ of rectangles, a packing of a newly obtained set is performed. 

Assume a certain set $\mathcal{N}$ of demands is given. For any FSP policy $\theta$, (that includes both assigning $(s_i, d_i)$, $i\in \cal{N}$, as well as cutting), let  $P_{max}^{\theta}$ and $H_{max}^{\theta}$ denote the maximal achieved heights,
calculated according to traditional and power strip packing respectively. And let also $P_{opt}^{F}$ and $H_{opt}^{F}$ denote the FSP optimal values for traditional and power strip packing. Clearly, $ A \leq P_{opt}^{F}\le H_{opt}^{F}$ and also 

\begin{align} \label{FSP_P}
 &A \leq P_{opt}^{F} \leq P_{opt}\;,  \\
\label{FSP_H}
&A \leq H_{opt}^{F} \leq H_{opt} \;. 
\end{align}
%In the sequel we try to show that $\dfrac{A}{Z^*} \leq \leq FSP_{P,opt} (\mathcal{N}) \leq FSP_{H,opt} (\mathcal{N})$.   
Next we will show that $ \frac{A}{Z*} \leq P_{opt}^{F}$, which, combined with \eqref{FSP_P}, proves Lemma \ref{LemAZ}.

Assume an arbitrary scheduling policy $\pi\in \Pi^{(\ell, r)}$ is given.  Cut each demand $ (s_i,d_i) $ horizontally to get the new set of very short height rectangles
$\mathcal{N}' = \{ (s_i,d_{i1}),(s_i,d_{i2}) , \ldots , (s_i,d_{iQ_{i}}) \} $ such that $d_{ij} = \delta  ~~\forall i,j $ where $\delta$ is an infinitesimal  positive real value. Now we try to pack these narrow rectangles in a strip of width $1$ as follows:
Let call an FSP policy $\theta^*$ as {\emph{filling} when there will not remain space to put any new rectangle with any arbitrary width $ s_i$ ($\ell \leq s_i \leq r $) in each of its rows, i.e. the sum of all gaps in each row is always less than $\ell$. (maybe except for the last row). ($\Theta^* $ is the set of all possible fillings $\theta^* $). Now to pack all the narrow rectangle with a \textit{filling}, we start with the first narrow rectangle and put it in the first row and then try to \emph{fill} this row with subsequent narrow rectangles. After filling a row we continue with filling subsequent rows until all the narrow rectangles will be scheduled.

\vspace{2mm}
\begin{lemma}\label{LemmaFillingOpt}
\begin{align} \label{eqnLemma4}
P_{opt}^{F} = \min _{\theta^* \in \Theta^*} {P_{max}^{\theta^*}}
\end{align}
\end{lemma}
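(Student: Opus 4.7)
One direction, $P_{opt}^{F} \le \min_{\theta^* \in \Theta^*} P_{max}^{\theta^*}$, is immediate since every filling is itself an FSP policy, i.e., $\Theta^* \subseteq \Pi^F$. The plan for the opposite inequality is to show that any FSP policy $\theta$ can be converted into a filling $\theta^* \in \Theta^*$ with $P_{max}^{\theta^*} \le P_{max}^{\theta}$, and then to apply this to a sequence of FSP policies whose max powers tend to $P_{opt}^{F}$.

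The conversion will rest on a single observation. Once every demand is horizontally sliced into pieces of infinitesimal height $\delta$, the PSP power satisfies $P^{\theta}(t) = \delta \cdot N_{\theta}(t)$, where $N_{\theta}(t)$ counts how many slices are active at time $t$. This count, and therefore $P_{max}^{\theta}$, depends exclusively on the horizontal time intervals of the slices and is completely insensitive to how those slices are grouped into rows. Consequently, any reshuffling of slices among rows that leaves each slice's time interval fixed preserves $P_{max}^{\theta}$ exactly.

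Armed with this observation, the construction iterates the following move. Scan rows from bottom to top; whenever a non-last row $k$ has gap-sum at least $\ell$, locate a slice in some later row $k'>k$ whose fixed time interval already lies inside an unoccupied portion of row $k$, and simply reassign it to row $k$. No time coordinates change, so $P_{max}$ is preserved. Because each such step transfers a slice from a higher row to a lower one, only finitely many moves are possible; at termination every non-last row has gap-sum strictly less than $\ell$, and the resulting $\theta^*$ is a filling. Applying the construction to FSP policies with $P_{max}^{\theta}\downarrow P_{opt}^F$ yields $\min_{\theta^* \in \Theta^*} P_{max}^{\theta^*} \le P_{opt}^{F}$.

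The main obstacle will be proving that at each step a transferrable slice actually exists. A gap-sum of at least $\ell$ in row $k$ only guarantees that the unused horizontal measure of row $k$ is at least $\ell$; it does not directly produce a single contiguous gap wide enough to host a slice of width in $[\ell, r]$ from a later row. The plan to circumvent this is to precede the row-reshuffling with a preparatory consolidation step in which slices within a row are shifted horizontally so that the free portion of the row coalesces into one contiguous gap at the right end, combined with choosing a sufficiently fine slicing at the outset so that widths of higher-row slices can be matched to these consolidated gaps. The delicate point, and the place where the bulk of the proof's technical care has to go, is to verify that the consolidation within each row can be arranged without inflating $N_{\theta}(t)$ at any $t$, so that the key observation continues to guarantee that $P_{max}$ is never increased throughout the procedure.
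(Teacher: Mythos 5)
Your proposal contains a genuine gap, and it sits exactly at the heart of the lemma. The row-reassignment invariance you identify is correct and worth stating ($P^{\theta}(t)=\delta\cdot N_{\theta}(t)$ depends only on the slices' time intervals, not on which row they are booked in), but it cannot by itself produce a filling, and your own termination claim fails: when no transferrable slice exists, the procedure halts with rows whose gap-sum is still at least $\ell$. A concrete obstruction: schedule every slice on the same interval $[0,r]$ with $K_0=\lfloor 1/r\rfloor\ge 2$ (so $\ell+r\le 2r\le 1$); then every row carries a single slice and has gap-sum $1-r\ge\ell$, and no reassignment of slices to rows that preserves time intervals can ever change this. So slices \emph{must} be moved in time, and every such move changes $N_{\theta}(t)$ and can raise the peak somewhere else. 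You explicitly flag this as ``the delicate point'' where ``the bulk of the proof's technical care has to go'' --- but you supply no argument for it, and the per-row consolidation you sketch is not obviously safe even in principle, since shifting the slices of one row interacts with the occupancy pattern of all other rows through $N_{\theta}(t)$. What you have written is therefore a proof plan with the decisive step missing, not a proof.

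For comparison, the paper attacks precisely this step by localizing to a time interval $\mathcal{T}$ on which $P^{\theta}(t)=P_{max}^{\theta}$: it first argues that rearrangements confined to $\mathcal{T}$ cannot increase the maximum there, and then repeatedly extracts a rectangle scheduled in $\mathcal{T}$ and places it in an unfilled row, so that the peak region loses a slice while the destination, having power strictly below $P_{max}^{\theta}$ and hence (by the $\delta$-quantization of all power levels) at most $P_{max}^{\theta}-\delta$, cannot be pushed above the old peak. Iterating yields a filling with $P_{max}^{\theta^*}\le P_{max}^{\theta}$. If you want to salvage your route, you would need to import something like this peak-localization argument to control the time-shifts in your consolidation step; without it, the inequality $\min_{\theta^*\in\Theta^*}P_{max}^{\theta^*}\le P_{opt}^{F}$ remains unproved.
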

\vspace{2mm}

The proof is given in Appendix \ref{ApxLemmaFillingOpt}. So with respect to Equation \eqref{eqnLemma4}, in the sequel to show that $ \frac{A}{Z*} \leq P_{opt}^{F}$ we just consider fillings $\theta^* \in \Theta^*$.

From Lemma \ref{LemmaWclosest} we know that in each row we cannot cover the time axis more than $Z^* $, then the maximum height (i.e. the upper boundary of the top row) cannot be less than $\frac{A}{Z^*}$, otherwise the total area will be less than $A$. Combining this fact with Inequality (\ref{FSP_H}) results in:
\begin{align}\label{T1212}
   \frac{A}{Z^*} \leq H_{opt}^{F} \leq H_{opt}
\end{align} 
It proves Lemma \ref{LemAZ} for TSP, but to prove this Lemma for PSP we should take into account all possible \emph{gaps} in each row. Because, as mentioned before, the way of computing the maximum height in PSP is different from TSP. Indeed in PSP in calculating the height or more precisely the total power at each time, only the active demands at that time will be considered and \emph{gaps} are not taken into account (Equation \eqref{eqnMaxPower} and Figure \ref{fig:PSPvsTSP}). Also note that by \emph{filling} each row arbitrarily, these gaps can split in more than just one part in a row and may happen in different places of the time axis for different rows. 
Lemma \ref{LemmaKrectangles} tells that in any arbitrary \emph{filling} there will be exactly $K_0 = \lfloor \frac{1}{r} \rfloor$ rectangles in each row and then in Lemma \ref{LemmaKnoGaps} it will be shown that for any arbitrary \emph{filling}, there will be exactly $K_0$ equal length and identically placed \emph{``gap-free''} interval in each row. 
%It means that there are $ K_0$ intervals in time axis for which there are scheduled narrow rectangles in each row. 
Therefore for each filling $ \theta^*$, the resulted $P_{max}$ here is indeed equal to $H_{max}$ and consequently $ P_{opt}^{F} = H_{opt}^{F}$. Eventually using Lemma \ref{LemmaFillingOpt} together with \eqref{T1212} proves Lemma \ref{LemAZ} for PSP.
%Therefore the resulted $P_{max}$ here is indeed equal to $H_{max}$. Finally by using Lemma \ref{LemmaFillingOpt} we have: $ P_{opt}^{F} = H_{opt}^{F}$, and hence Lemma \ref{LemAZ} is also valid for PSP.

\begin{lemma}\label{LemmaKrectangles}  
In any filling $\theta^*$ for a set of narrow rectangles of arbitrary widths $s_i$ (s.t. $ \ell \leq s_i \leq r $), exactly $K_0 = \lfloor \frac{1}{r} \rfloor = \lfloor \frac{1}{\ell} \rfloor $ narrow rectangles are needed (This may not apply for the last row which may contain less rectangles).
\end{lemma}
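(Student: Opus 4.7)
The plan is a short counting argument based on the defining property of a filling: the total length of all gaps in a row is strictly less than $\ell$. Equivalently, if the row contains $K$ narrow rectangles of widths $s_1,\dots ,s_K\in[\ell,r]$, then
\begin{align}
1-\ell \;<\;\sum_{i=1}^{K}s_i\;\le\;1.\nonumber
\end{align}
I would first dispose of the auxiliary identity $\lfloor 1/r\rfloor=\lfloor 1/\ell\rfloor$ asserted inside the lemma: since $(\ell,r)\notin\mathcal{G}$, $1/r$ cannot be an integer (otherwise $\lceil 1/r\rceil=1/r\le 1/\ell$ and the pair would be in $\mathcal{G}$), so $\lceil 1/r\rceil=\lfloor 1/r\rfloor+1>1/\ell$, giving $\lfloor 1/r\rfloor\ge\lfloor 1/\ell\rfloor$; the opposite inequality is immediate from $r\ge \ell$, so the two floors coincide.

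For the upper bound $K\le K_0$, I would use $K\ell\le\sum_{i=1}^{K}s_i\le 1$, hence $K\le 1/\ell$, which together with integrality of $K$ yields $K\le\lfloor 1/\ell\rfloor=K_0$. For the lower bound $K\ge K_0$, I would argue by contradiction: if $K\le K_0-1$, then
\begin{align}
\sum_{i=1}^{K}s_i \;\le\; K r \;\le\;(K_0-1)r \;=\;K_0 r-r\;\le\;1-r\;\le\;1-\ell,\nonumber
\end{align}
where the third inequality uses $K_0 r\le 1$ (the definition of $K_0=\lfloor 1/r\rfloor$) and the last uses $\ell\le r$. This contradicts the filling inequality $\sum s_i>1-\ell$. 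Combining the two bounds gives $K=K_0$, as claimed.

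I do not anticipate a genuine obstacle here; the only delicate point is making sure the chain $(K_0-1)r\le 1-r\le 1-\ell$ is valid, which is why the identity $K_0=\lfloor 1/r\rfloor=\lfloor 1/\ell\rfloor$ and the bad-region assumption must be invoked explicitly. The caveat about the last row is handled automatically, since the lemma itself excludes it: the last row need not be filled, so its count of rectangles can be strictly smaller.
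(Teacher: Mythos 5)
Your proof is correct and takes essentially the same route as the paper's: the upper bound $K\le K_0$ comes from $K\ell\le 1$, and the lower bound from observing that with fewer than $K_0$ rectangles the uncovered length would be at least $\ell$, contradicting the definition of a filling. If anything, your version is more explicit than the paper's, in particular in actually justifying the identity $\lfloor 1/r\rfloor=\lfloor 1/\ell\rfloor$ from the assumption $(\ell,r)\notin\mathcal{G}$, which the paper asserts without proof.
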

The proof is given in Appendix \ref{ApxLemmaKrectangles}.

\begin{lemma}\label{LemmaKnoGaps}
Under the same conditions as in Lemma \ref{LemmaKrectangles}, each row will contain $K_0$ identical (i.e. equal length and identically placed) intervals, $ (1-(K_0 - i +1) \cdot \ell ~, ~i \cdot \ell ) $ for all $ i=1,\ldots, K_0 $, which cannot contain any gaps (maybe except for the last row).
\end{lemma}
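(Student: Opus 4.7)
The plan is to show that, in any filling $\theta^*$, the constraints $\ell \leq s_j \leq r$ together with the defining property that the total gap in a non-last row is strictly less than $\ell$ force the $i$-th rectangle of that row to entirely contain a specific sub-interval whose position depends only on $i$ and $\ell$, not on the particular widths chosen.

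Fix a non-last row; by Lemma \ref{LemmaKrectangles} it consists of exactly $K_0$ rectangles, which I order from left to right as $(s_1,d_1),\ldots,(s_{K_0},d_{K_0})$ with left endpoints $x_1 < \cdots < x_{K_0}$. I would denote by $g_0, g_1, \ldots, g_{K_0}$ the gaps (respectively to the left of the first rectangle, between consecutive rectangles, and to the right of the last one), so that
\begin{equation*}
x_i \;=\; g_0 + \sum_{j<i}(s_j + g_j), \qquad x_i + s_i \;=\; 1 - \sum_{j>i} s_j \;-\; \sum_{k\geq i} g_k.
\end{equation*}
Using $s_j \geq \ell$ and $g_k \geq 0$, the first identity yields $x_i \geq (i-1)\ell$, hence $x_i + s_i \geq i\ell$ since $s_i \geq \ell$. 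The second identity, after dropping the nonnegative gap sum and bounding $s_j \geq \ell$ for $j \geq i$, gives $x_i \leq 1 - (K_0-i+1)\ell$. Combining these,
\begin{equation*}
\bigl[\,1-(K_0-i+1)\ell,\; i\ell\,\bigr] \;\subseteq\; [\,x_i,\; x_i + s_i\,].
\end{equation*}

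Thus the $i$-th rectangle entirely covers the interval $(1-(K_0-i+1)\ell,\; i\ell)$, independently of the specific filling. Since $K_0 = \lfloor 1/\ell\rfloor$ satisfies $K_0\ell \leq 1 < (K_0+1)\ell$, each such interval has positive length $(K_0+1)\ell - 1$, the $K_0$ intervals share this common length, they are pairwise non-overlapping (the condition $i\ell \leq 1-(K_0-i)\ell$ between consecutive ones is equivalent to $K_0\ell \leq 1$), and they sit at positions that depend only on $i$ and $\ell$, so they are identically placed in every non-last row. Because each interval lies inside its rectangle, no gap of the filling can intersect it, which is the claim.

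The main obstacle I anticipate is purely bookkeeping: the two-sided bound on $x_i$ must be extracted from the same linear identity with careful indexing of the gap sums, and one must check that the resulting intervals are consistent with the "equal length and identically placed" description. The caveat about the last row is inherited directly from Lemma \ref{LemmaKrectangles} and requires no separate argument.
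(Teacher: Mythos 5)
Your proof is correct and follows essentially the same route as the paper: both bound the latest possible start time of the $i$-th rectangle by $1-(K_0-i+1)\ell$ and its earliest possible finish time by $i\ell$, concluding that the $i$-th rectangle must cover the stated interval of positive length $(K_0+1)\ell-1$. The only difference is presentational — you derive the two bounds from the explicit partition identity for the row, whereas the paper argues verbally via the extremal configurations; the content is identical.
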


\begin{proof}
Assume that we are trying to put $K_0$ arbitrary width narrow rectangles in a row. Now assume that we want to put the $i^{th}$ rectangle in this row. The largest value for the starting time of $i^{th}$ narrow rectangle, i.e. $\tau_i$, occurs when all the narrow rectangles starting after $\tau_i$, including $i^{th}$ rectangle itself, have the same width $\ell$. Also all possible gaps happen before $\tau_i$. Then the length of the remaining time after $\tau_i$ is : $( K_0 -(i-1) ) \cdot \ell $ and hence the largest value for $\tau_i$ is as follows: 
$$\tau_{i,largest}=1 - ( K_0 -(i-1) ) \cdot \ell $$
On the other hand the smallest value for the finishing time of this rectangle, i.e. $f_i=\tau_i + s_i $,  happens only when this rectangle and all of the previous $(i-1)$ narrow rectangles have the same width $\ell$ and no gap placed in the time axis until after the $f_i $. Then the smallest value for $f_i$ is: $$f_{i,smallest}= i \cdot \ell $$
Therefore $i^{th}$ narrow rectangle starts at most at time  $\tau_{i,largest} $ and remains active at least until $f_{i,smallest} $. Then we conclude that the $i^{th}$ narrow rectangle (and \textbf{only} this one) is definitely active during the following time interval $ ( 1-(K_0 - i +1) \cdot \ell , i \cdot \ell ) $ with the length:
\begin{align}
&i \cdot \ell-  (1-(K_0 - i +1) \cdot \ell)   \nonumber \\
= &(K_0 +1) \cdot \ell -1 =  \left \lceil \frac{1}{\ell}  \right \rceil \cdot \ell -1 > 0  \nonumber
\end{align}

Therefore we have $K_0 = \lfloor \frac{1}{r} \rfloor = \lfloor \frac{1}{\ell} \rfloor $ active intervals with equal non-zero length $ (K_0 +1) \cdot \ell -1  $.
\end{proof}

Note that it is impossible for a demand to be completely placed in the interval between two consecutive active intervals, because the length of such an interval is $1- K_0 \cdot \ell $ which is less than $\ell$ (because $ 1 < \lceil \frac{1}{\ell} \rceil = (K_0 + 1) \cdot \ell$).

\begin{corollary}\label{CollaryInterval}
In $i^{th}$ active interval of any filling $\theta^*$ (Lemma \ref{LemmaKnoGaps}), when filling each row with narrow rectangles, definitely the $i^{th}$ narrow rectangle in that row is scheduled and the $i^{th}$ narrow rectangle is the only one which can be scheduled in this interval (maybe except for the last row).
\end{corollary}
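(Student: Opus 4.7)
The plan is to deduce Corollary \ref{CollaryInterval} as essentially a bookkeeping consequence of the proof of Lemma \ref{LemmaKnoGaps}. The first half (the $i$-th rectangle \emph{is} scheduled in the $i$-th active interval) is already built into the proof of Lemma \ref{LemmaKnoGaps}: the computation of $\tau_{i,\text{largest}}$ and $f_{i,\text{smallest}}$ shows precisely that any assignment of the $K_0$ rectangles in a (non-last) row forces the $i$-th rectangle to be active throughout $\bigl(1-(K_0-i+1)\ell,\; i\cdot \ell\bigr)$. So I would simply cite Lemma \ref{LemmaKnoGaps} for this direction.

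For the harder half, uniqueness, I would fix $j\ne i$ and show that the $j$-th narrow rectangle of the row cannot overlap the $i$-th active interval. I split into two cases.

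\emph{Case $j<i$.} Since the remaining rectangles of indices $j+1,\dots,K_0$ each have width at least $\ell$ and must fit in $[f_j,1]$, we have $1-f_j \ge (K_0-j)\ell$, i.e.\ $f_j \le 1-(K_0-j)\ell$. Because $j\le i-1$ gives $K_0-j\ge K_0-i+1$, this yields
\begin{equation*}
f_j \;\le\; 1-(K_0-j)\ell \;\le\; 1-(K_0-i+1)\ell,
\end{equation*}
so the $j$-th rectangle finishes no later than the left endpoint of the $i$-th active interval.

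\emph{Case $j>i$.} Symmetrically, the $j-1$ rectangles preceding it have total width $\ge (j-1)\ell$, so $\tau_j \ge (j-1)\ell \ge i\cdot \ell$, i.e.\ the $j$-th rectangle starts no earlier than the right endpoint of the $i$-th active interval.

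Combining the two cases, no rectangle other than the $i$-th one is active on $\bigl(1-(K_0-i+1)\ell,\; i\cdot \ell\bigr)$, which is exactly the claim. The last-row caveat is retained simply because the "fit the rest" and "fit the predecessors" inequalities above both rely on the row actually containing $K_0$ rectangles, as guaranteed by Lemma \ref{LemmaKrectangles}; in the last row this count may drop. I do not anticipate a genuine obstacle here — the entire argument is a few lines of arithmetic on $\tau_j$ and $f_j$ bounds inherited from Lemma \ref{LemmaKnoGaps}; the only thing to be careful about is to invoke \emph{both} the lower bound on starting times (for later rectangles) and the upper bound on finishing times (for earlier ones), rather than reusing only one of them.
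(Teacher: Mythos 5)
Your proposal is correct and follows essentially the same route as the paper, which treats the corollary as an immediate consequence of the $\tau_{i,\text{largest}}$ / $f_{i,\text{smallest}}$ bookkeeping in the proof of Lemma \ref{LemmaKnoGaps}. Your explicit two-case argument for uniqueness ($f_j \le 1-(K_0-i+1)\ell$ for $j<i$ and $\tau_j \ge i\cdot\ell$ for $j>i$) is a welcome elaboration of the ``and only this one'' claim that the paper asserts inside that proof without spelling out the arithmetic.
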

Using Lemma \ref{LemmaKnoGaps}, we conclude that in FSP when \emph{filling} the rows with any arbitrary placement and with any arbitrary width ($s_i$) for narrow rectangles, there will be exactly $K_0$ gap-free \emph{identical} intervals in every row. Therefore the $P_{max}^{\theta^*}$ is exactly the same as $H_{max}^{\theta^*}$ which is the sum of the heights of all the rows. 

Note that no one can change FSP above and get the lower $P_{max}^{\theta^*}$. For example if one tries to decrease the $P_{max}^{\theta^*}$ by just one level, it needs to remove at least $K_0$ narrow rectangles from different gap-free intervals, either from one  row or $K_0$ different rows (corollary \ref{CollaryInterval}). Then these narrow rectangles will form at least one new row which then only can keep the total height the same as before and cannot decrease the $P_{max}^{\theta^*}$. So $P_{max}^{\theta^*}$ above is in fact minimum among all possible \emph{fillings} and hence using \eqref{eqnLemma4} in Lemma \ref{LemmaFillingOpt} it is equal to $P_{opt}^{F}$.

From Lemma \ref{LemmaWclosest} we know that in a filling of narrow rectangles, in each row we cannot cover the time axis more than $Z^* = r \cdot \lfloor \frac{1}{r} \rfloor $, then the maximum height (i.e. the upper boundary of the top row) cannot be less than $\frac{A}{Z^*}$, otherwise the total area will be less than $A$. Therefore we have: $  \frac{A}{Z^*} \leq P_{opt}^{F} \leq P_{opt}$ which proves Lemma \ref{LemAZ}.  \qed
 
In the next section we will use a simple linear time algorithm, in which we try to keep the $P(t)$ (and hence $P_{max}$) around $\frac{A}{Z^*}$. Using this algorithm we will show that the $ P_{opt} \le \overline{A}+\frac{A_{max}}{\ell} $ which proves the upper bound of Theorem \ref{ThmMain} and hence it also proves that this algorithm is asymptotically optimum.
%%%%%%%%%%%%%%%%%%%%%%%%%%%%%%%%%%%%%%%%%%%%%%%%%%%%%%%%%%%%%%%%%%%%%%%%%%%%%%%%%%%%%%%%%%%%%%%%%%%%%%%%%%%%%%%%%%%%%%%%%%%%%%%%%%%%%%%%%%%%%%%%%%%%%%%%%%%%%%%%%%%%%
%%%%%%%%%%%%%%%%%%%%%%%%%%%%%%%%%%%%%%%%%%%%%%%%%%%%%%%%%%%%%%%%%%%%%%%%%%%%%%%%%%%%%%%%%%%%%%%%%%%%%%%%%%%%%%%%%%%%%%%%%%%%%%%%%%%%%%%%%%%%%%%%%%%%%%%%%%%%%%%%%%%%%
\vspace{-2mm}
\section{Algorithm}\label{SecAlgorithm}
From Lemma \ref{LemmaWachievable}, we know that $t=1$ is achievable if and only if $ \lceil \frac{1}{r} \rceil \leq  \frac{1}{\ell}  $. Based on this fact, in the sequel, we will introduce different cases and in each case we will find a proper policy which results in acceptable $P_{max} $. 
%%%%%%%%%%%%%%%%%%%%%%%%%%%%%%%%%%%%%%%%%%%%%%%%%%%%%%%%%%%%%%%%%%%%%%%%%%%%%%%%%%%%%%%%%%%%%%%%%%%%%%%%%%%%%%%%%%%%%%%%%%%%%%%%%%%%%%%%%%%%%%%%%%%%%%%%%%%%%%%%%%%%%
%%%%%%%%%%%%%%%%%%%%%%%%%%%%%%%%%%%%%%%%%%%%%%%%%%%%%%%%%%%%%%%%%%%%%%%%%%%%%%%%%%%%%%%%%%%%%%%%%%%%%%%%%%%%%%%%%%%%%%%%%%%%%%%%%%%%%%%%%%%%%%%%%%%%%%%%%%%%%%%%%%%%%
%%%%%%%%%%%%%%%%%%%%%%%%%%%%%%%%%%%%%%%%%%%%%%%%%%%%%%%%%%%%%%%%%%%%%%%%%%%%%%%%%%%%%%%%%%%%%%%%%%%%%%%%%%%%%%%%%%%%%%%%%%%%%%%%%%%%%%%%%%%%%%%%%%%%%%%%%%%%%%%%%%%%%
%%%%%%%%%%%%%%%%%%%%%%%%%%%%%%%%%%%%%%%%%%%%%%%%%%%%%%%%%%%%%%%%%%%%%%%%%%%%%%%%%%%%%%%%%%%%%%%%%%%%%%%%%%%%%%%%%%%%%%%%%%%%%%%%%%%%%%%%%%%%%%%%%%%%%%%%%%%%%%%%%%%%%
\vspace{-2mm}
\subsection{Ideal Cases}\label{SubSection:Ideal}
First assume that $ \ell \leq 1 \hspace{0.5cm} \mbox{and} \hspace{0.5cm} r = 1 $ (So $1$ is achievable). In this case we have an optimal policy $ \pi^* $ as follows: stretch the width of each demand to the whole time interval $[0,1]$, and then piling the demands up on top of each other. Another ideal case is when we have $ \ell \leq \frac{A_{i}}{A} \leq r$,  $ i=1,\cdots,n $. In this case in optimal policy $ \pi^*$ we have: $s_i= \frac{A_i}{A} $, therefore since $ \sum_{i=1}^n s_i = \sum_{i=1}^n \frac{A_i}{A} = 1 $, $1$ is achievable. So in this case we can simply stretch each intensity to $A$ as $A_i = \frac{A_i}{A} \times A$ and place the demands side by side. In both cases we have $ P(t)=A~~~ \mbox{for all $t \in [0,1] $} $ and hence:
\begin{align}\label{7}
P_{\max}^{\pi^*}= A \le \overline{A}+\dfrac{A_{max}}{\ell} 
\end{align}

So for Ideal cases, we can pack the demands in a strip of height $A$ and  and because $A \leq P_{opt} \leq H_{opt}$, the output in this case is absolutely optimum. 
% so the Theorem \ref{ThmMain} is proved for this case.
% 
%%%%%%%%%%%%%%%%%%%%%%%%%%%%%%%%%%%%%%%%%%%%%%%%%%%%%%%%%%%%%%%%%%%%%%%%%%%%%%%%%%%%%%%%%%%%%%%%%%%%%%%%%%%%%%%%%%%%%%%%%%%%%%%%%%%%%%%%%%%%%%%%%%%%%%%%%%%%%%%%%%%%%
\vspace{-2mm}
\subsection{Near Ideal Cases}\label{SubSection:Case2}
In this case $t=1$ is achievable and hence $Z^*=1$. Then we perform scheduling policy $\pi $ as follows: Simply divide the time interval $[0,1]$ into $K_0 = \lceil \frac{1}{r} \rceil$ non-overlapping time slots with equal length $S_0=\frac{1}{K_0}$ and since in this case we have $\lceil \frac{1}{r} \rceil \leq  \frac{1}{\ell} $ ( Lemma \ref{LemmaWachievable}), we are sure that $ \ell \leq S_0 \leq r $. 

Now we can start with the first demand and put it in the first slot and continue this until the height of that slot becomes greater or equal to the threshold value, $A$, then continue by putting the next demand in the next slot and doing the same procedure until all of the demands have been packed. Therefore the height of each slot is at most $A+\frac{A_{max}}{S_0} $. So in this case we have:
\begin{align}\label{PmaxNearIdeal}  
P_{\max}^{\pi} \leq A+\dfrac{A_{max}}{S_0} \leq \overline{A}+\dfrac{A_{max}}{\ell} 
\end{align}
%%%%%%%%%%%%%%%%%%%%%%%%%%%%%%%%%%%%%%%%%%%%%%%%%%%%%%%%%%%%%%%%%%%%%%%%%%%%%%%%%%%%%%%%%%%%%%%%%%%%%%%%%%%%%%%%%%%%%%%%%%%%%%%%%%%%%%%%%%%%%%%%%%%%%%%%%%%%%%%%%%%%%
%%%%%%%%%%%%%%%%%%%%%%%%%%%%%%%%%%%%%%%%%%%%%%%%%%%%%%%%%%%%%%%%%%%%%%%%%%%%%%%%%%%%%%%%%%%%%%%%%%%%%%%%%%%%%%%%%%%%%%%%%%%%%%%%%%%%%%%%%%%%%%%%%%%%%%%%%%%%%%%%%%%%%
%%%%%%%%%%%%%%%%%%%%%%%%%%%%%%%%%%%%%%%%%%%%%%%%%%%%%%%%%%%%%%%%%%%%%%%%%%%%%%%%%%%%%%%%%%%%%%%%%%%%%%%%%%%%%%%%%%%%%%%%%%%%%%%%%%%%%%%%%%%%%%%%%%%%%%%%%%%%%%%%%%%%%
\vspace{-8mm}
\subsection{Non-Ideal Cases}\label{SubSection:Case3}
In this case $t=1$ is NOT achievable, so the constraints on $\ell$ and $r$ are as follows:

\begin{align}
0 < \ell<r<1 \hspace{1.5cm} \text{and} \hspace{1.5cm}  
\left \lceil \frac{1}{r} \right \rceil >  \frac{1}{\ell} \nonumber 
\end{align}
These conditions imply that $ \lfloor \frac{1}{r} \rfloor = \lfloor \frac{1}{\ell} \rfloor$. Using Lemma \ref{LemmaWclosest}, we have: $Z^* = r \cdot \lfloor \frac{1}{r} \rfloor <1$.

In this case the chosen policy $\pi $ is as follows: We simply divide the time interval $[0,Z^*]$ into $K_0 = \lfloor \frac{1}{r} \rfloor $ non-overlapping time slots with equal length $r$ and leave the remaining part of the time interval, i.e. $[Z^*,1] $ \emph{unscheduled}. Now we can use the same approach used in Subsection \ref{SubSection:Case2}, while here $s_i=S_0=r,~~i=1,\ldots ,n $ and $K_0 = \lfloor \frac{1}{r} \rfloor $ and also the threshold value is $\frac{A}{Z^*}$. Then in this case we have:
\begin{align}\label{PmaxNonIdeal}  
P_{\max}^{\pi} \leq \dfrac{A}{Z^*} +\dfrac{A_{max}}{r} \leq \overline{A} + \dfrac{A_{max}}{\ell} 
\end{align}
%
%%%%%%%%%%%%%%%%%%%%%%%%%%%%%%%%%%%%%%%%%%%%%%%%%%%%%%%%%%%%%%%%%%%%%%%%%%%%%%%%%%%%%%%%%%%%%%%%%%%%%%%%%%%%%%%%%%%%%%%%%%%%%%%%%%%%%%%%%%%%%%%%%%%%%%%%%%%%%%%%%%%%%
\section{Proof of Theorem \ref{ThmMain} }\label{SecProofThm1}
\begin{proof}
Lemma \ref{LemAZ} showed that $\overline{A}\le P_{opt} $, which proves the lower bound part of Theorem \ref{ThmMain}. In addition in Section \ref{SecAlgorithm}, for each set of demands $\mathcal{N}$ with pair $(\ell,r)$, resulting in ideal, near-ideal, non-ideal cases, we introduced a policy $ \pi$ with performance $P_{\max}^{\pi} \le  \overline{A} + \dfrac{A_{max}}{\ell} $ which results in:  
%%%
%%%
%%%Also By combing the results we obtained for each of ideal, near-ideal, non-ideal cases, (i.e. inequalities \eqref{7}, \eqref{PmaxNearIdeal} and \eqref{PmaxNonIdeal} respectively) for every set of demands $\mathcal{N}$ and for all pair $(\ell,r)$, we introduced a policy $ \pi$ with performance $P_{\max}^{\pi} \le  \overline{A} + \dfrac{A_{max}}{\ell} $ which results in:  
\begin{align}\label{8}
P_{opt}\le \overline{A} + \dfrac{A_{max}}{\ell}
\end{align} 
which proves the upper bound part of Theorem \ref{ThmMain}. Therefore combining inequalities \eqref{IneqClaim} and \eqref{8} gives us: $P_{opt}\in \left[\overline{A}, \; \overline{A}+\frac{A_{max}}{\ell}\right]$ which proves Theorem \ref{ThmMain}.
\end{proof}
%%%%%%%%%%%%%%%%%%%%%%%%%%%%%%%%%%%%%%%%%%%%%%%%%%%%%%%%%%%%%%%%%%%%%%%%%%%%%%%%%%%%%%%%%%%%%%%%%%%%%%%%%%%%%%%%%%%%%%%%%%%%%%%%%%%%%%%%%%%%%%%%%%%%%%%%%%%%%%%%%%%%%
%\vspace{-2mm}
\section{Computational Results}\label{SecSimulation}
Algorithm \ref{algRectangular}, which is called ``\emph{Power Strip Packing}'' algorithm, summarizes all the steps of packing the demands. Clearly the running time of this algorithm is linear in $n$, the number of demands. As we discussed in Section \ref{SecAlgorithm}, we showed for this algorithm we have:
\begin{align}\label{9}
P_{\max}^{PSP} \le \overline{A}+\frac{A_{\max}}{\ell} \le P_{opt} + \frac{A_{\max}}{\ell}
\end{align}
In obtaining the last inequality in \eqref{9} we used Inequality \eqref{IneqClaim} in Lemma \ref{LemAZ}. So as it was mentioned in Subsection \ref{SubRelatedLitrature}, using the language of \cite{kenyon2000} and \cite{MalleableJansen}, it corresponds to performance ratio being exactly $1$. Therefore this algorithm is asymptotically optimal for all different cases. Furthermore as we can see in Algorithm \ref{algRectangular}, this algorithm is linear time.
%Also if further $ A_{max} \ll \ell$, e.g. $ A_i = O(\frac{1}{n}) $ and $n$ is very large, then $ \frac{A_{max}}{\ell} \ll 1 $.
%

We can achieve better performance in terms of flatter $ P(t)$ for $ t \in [0,1]$ and get lower $P_{max} $, albeit at the expense of increasing the number of operations. For example the greedy algorithm works by ordering the demands by non-increasing areas $(A_i)$ and select $s_i=S_0$ for $ i=1,\cdots, n$ exactly the same as that of in PSP Algorithm \ref{algRectangular} (and hence $ d_i= \frac{A_i}{S_0}$). Then start with the first (i.e. largest) demand and put it in a slot with minimum total height and continue this for subsequent demands until packing all of the demands. Note that when greedy algorithm place a new demand in a slot, it is not possible that the height of every slot being greater than $\frac{A}{Z^*} $, otherwise the total area $\frac{A}{Z^*} \times Z^* $ exceeds $A$, the sum of the demands. Therefore the height of each slot is at most $ \frac{A}{Z^*} +\frac{A_{max}}{S_0} $. So the inequalities (\ref{PmaxNearIdeal}) and (\ref{PmaxNonIdeal}) are still valid and hence it is also asymptotically optimal. However the running time is $O(n^2)$. 

Figures \ref{fig:AlgNearIdeal} and \ref{fig:AlgNonIdeal} illustrate the outputs of the presented algorithms (Algorithm \ref{algRectangular} and the greedy algorithm) for different values of $ |\mathcal{N}|$, the number of demands, in near ideal cases and non-ideal cases respectively. In these figures the demands, i.e. $A_i $'s, are independent and identically uniformly distributed in the interval $ [0,\ell] $. For each value of $ |\mathcal{N}|$, each of the algorithms is performed 30 times and then the mean  value is depicted as the corresponding $P_{\max} $ of each algorithm. Furthermore for some values of $ |\mathcal{N}|$ the corresponding $0.95\%$ confidence intervals are shown in these figures.

As it can be seen from these figures, in both cases, with or without ordering always we have: $P_{max} \leq \frac{A}{Z^*} + \frac{A_{max}}{\ell}  $. However as we can see in these figures, ordering can improve the performance of the algorithm albeit at the expense of increasing the number of operations.
%%%%%%%%%%%%%%%%%%%%%%%%%%%%%%%%%%%%%%%%%%%%%%%%%%%%%%%%%%%%%%%%%%%%%%%%%%%%%%%%%%%%%%%%%%%%%%%%%%%%%%%%%%%%%%%%%%%%%%%%%%%%%%%%%%%%%%%%%%%%%%%%%%%%%%%%%%%%%%%%%%%%%
%%%%%%%%%%%%%%%%%%%%%%%%%%%%%%%%%%%%%%%%%%%%%%%%%%%%%%%%%%%%%%%%%%%%%%%%%%%%%%%%%%%%%%%%%%%%%%%%%%%%%%%%%%%%%%%%%%%%%%%%%%%%%%%%%%%%%%%%%%%%%%%%%%%%%%%%%%%%%%%%%%%%%
%%%%%%%%%%%%%%%%%%%%%%%%%%%%%%%%%%%%%%%%%%%%%%%%%%%%%%%%%%%%%%%%%%%%%%%%%%%%%%%%%%%%%%%%%%%%%%%%%%%%%%%%%%%%%%%%%%%%%%%%%%%%%%%%%%%%%%%%%%%%%%%%%%%%%%%%%%%%%%%%%%%%%
%
%
%
\algsetup{
linenosize=\small,
linenodelimiter=.,
indent=1em
}
\begin{algorithm}
\caption{Power Strip Packing algorithm}
\label{algRectangular}
\begin{algorithmic}[1]
\STATE INPUT $\ell$,$r$, Demands $A_i$  $i=1,\ldots , n$
\STATE OUTPUT $P_{max} $,$H_{max} $ and Scheduling policy 
\STATE $A=\sum_{i=1}^n A_i$, $A_{min} = \min_i \{A_i \} $, $A_{max} = \max_i \{A_i \} $
\IF{$ (r \geq 1)$ or $ ( \ell \leq \frac{A_{min}}{A} \leq \frac{A_{max}}{A} \leq r)$}
\STATE Perform Ideal Scheduling (Subsection \ref{SubSection:Ideal}) 
\STATE $P_{max} = H_{max}=A $
\ELSE
%\STATE order the demands by non-increasing areas $A_i$'s
\STATE if $\{ \lceil \frac{1}{r} \rceil \leq  \frac{1}{\ell} \}$ then  $K_0 = \lceil \frac{1}{r} \rceil$ and $S_0=\frac{1}{K_0}$ and $ Z^* = 1 $
\STATE else $K_0 = \lfloor \frac{1}{r} \rfloor$ and $S_0=r$ and $Z^* = r \cdot \lfloor \frac{1}{r} \rfloor$

\STATE $Threshold = \frac{A}{Z^*}$ 
\STATE Fill each slot until its height exceeds the $ Threshold$, then continue with next slot
%%\IF{$ \left\lceil \dfrac{1}{r} \right\rceil \leq  \dfrac{1}{\ell}$} 
%%\STATE $K_0 = \left\lceil \dfrac{1}{r} \right\rceil$ and $S_0=\dfrac{1}{K_0}$
%%\ELSE
%%\STATE $K_0 = \left\lfloor \dfrac{1}{r} \right\rfloor$ and $S_0=r$
%%\ENDIF
%\STATE Put each demand in a slot with minimum height
\STATE $P_{max} = H_{max}= \max_{j} \{\text{height of $j $-th slot} ~~~~,j=1,\ldots,K_0 \} $
\ENDIF 
\end{algorithmic}
\end{algorithm}
%
%
%
%
%
%
%
%%%%%%%%%%%%%%%%%%%%%%%%%%%%%%%%%%%%%%%%%%%%%%%%%%%%%%%%%%%%%%%%%%%%%%%%%%%%%%%%%%%%%%%%%%%%%%%%%%%%%%%%%%%%%%%%%%%%%%%%%%%%%%%%%%%%%%%%%%%%%%%%%%%%%%%%%%%%%%%%%%
%%%%%%%%%%%%%%%%%%%%%%%%%%%%%%%%%%%%%%%%%%%%%%%%%%%%%%%%%%%%%%%%%%%%%%%%%%%%%%%%%%%%%%%%%%%%%%%%%%%%%%%%%%%%%%%%%%%%%%%%%%%%%%%%%%%%%%%%%%%%%%%%%%%%%%%%%%%%%%%%%%
%Also in figure \ref{fig:ConfInterval}, for different values of $ |\mathcal{N}|$, i.e. the number of demands, the algorithm \ref{algRectangular} is performed 30 times and then for each value of $ |\mathcal{N}|$ the mean value, Mean, is shown within corresponding $0.95\%$ confidence interval, i.e. $[Low,High] $ in this figure. 
%
%\begin{figure}[tp]
%\centering
%\includegraphics[width=0.50 \textwidth , height=2.25in] {ConIntervalNonSorted}
%\caption{Confidence interval of Threshold-based Algorithm (Number of Iterations = 30).}
%\label{fig:ConfInterval}
%\end{figure}
%%%%%%%%%%%%%%%%%%%%%%%%%%%%%%%%%%%%%%%%%%%%%%%%%%%%%%%%%%%%%%%%%%%%%%%%%%%%%%%%%%%%%%%%%%%%%%%%%%%%%%%%%%%%%%%%%%%%%%%%%%%%%%%%%%%%%%%%%%%%%%%%%%%%%%%%%%%%%%%%%%
%%%%%%%%%%%%%%%%%%%%%%%%%%%%%%%%%%%%%%%%%%%%%%%%%%%%%%%%%%%%%%%%%%%%%%%%%%%%%%%%%%%%%%%%%%%%%%%%%%%%%%%%%%%%%%%%%%%%%%%%%%%%%%%%%%%%%%%%%%%%%%%%%%%%%%%%%%%%%%%%%%
%\vspace{-4mm}
\section{Conclusion}
%Power Strip Packing (PSP) problem which is introduced in this paper Motivated from the smart grid we introduce the power assignment problem
%We consider a problem of supplying electricity to a set of $\mathcal{N}$ customers in a smart-grid framework.
In this paper considering a problem of supplying electricity to malleable demands, we introduced off-line Power Strip Packing (PSP) problem and showed that using a linear time algorithm will result in asymptotically optimal performance. One may extend this problem such that different demands can have different scheduling bounds. Another (and may be more important) extension to this problem is On-line Power Strip Packing where the goal is serving the arriving demands with different stochastic characteristics such as energy demands, scheduling bounds $(\ell, r)$, arriving times and the deadline for scheduling each of them.

%
%\begin{figure}[tp]
%\centering
%%\includegraphics[width=.50 \textwidth ] {AlgorithmsPerformancesNearIdeal}
%\includegraphics[width=.50 \textwidth ] {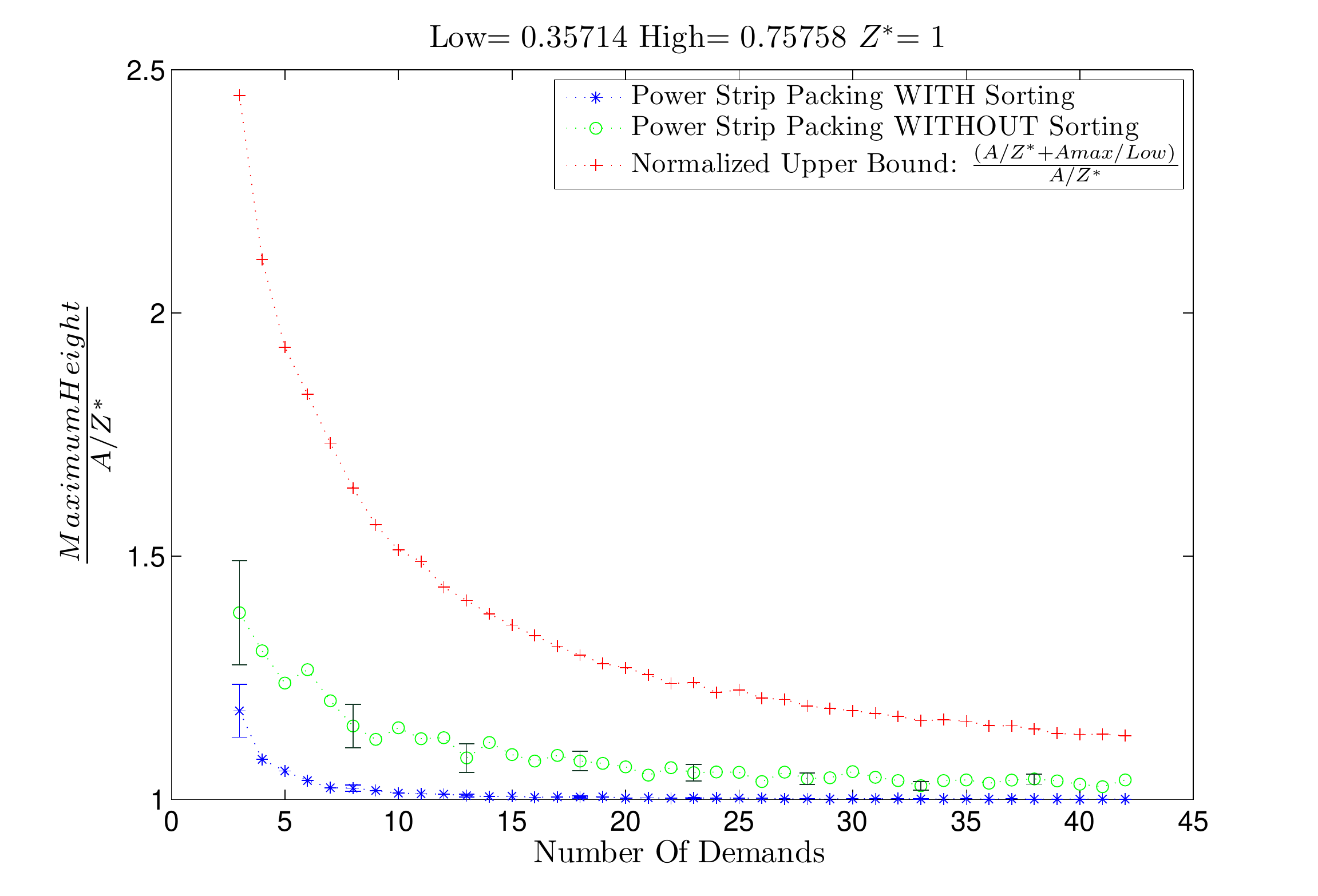}\vspace{-1mm}
%\caption{Comparing performances of the algorithms for Non-Ideal cases with  $ \frac{A}{Z^*} + \frac{A_{max}}{\ell} $ for different number of demands, where $\ell=0.35714 $ and $ r=0.75758  $ and $A_i$'s are  independent and identically uniformly distributed in the interval $ [0,\ell] $.}\vspace{-3mm}
%\label{fig:AlgNearIdeal}
%\end{figure}
%%%%%%%%%%%%%%%%%%%%%%%%%%%%%%%%%%%%%%%%%%%%%%%%%%%%%%%%%%%%%%%%%%%%%%%%%%%%%%%%%%%%%%%%%%%%%%%%%%%%%%%%%%%%%%%%%%%%%%%%%%%%%%%%%%%%%%%%%%%%%%%%%%%%%%%%%%%%%%%%%%%%%
%\appendix
%\renewcommand\thesection{Appendix \Alph{section}:}
%%%%%%%%%%%%%%%%%%%%%%%%%%%%%%%%%%%%%%%%%%%%%%%%%%%%%%%%%%%%%%%%%%%%%%%%%%%%%%%%%%%%%%%%%%%%%%%%%%%%%%%%%%%%%%%%%%%%%%%%%%%%%%%%%%%%%%%%%%%%%%%%%%%%%%%%%%%%%%%%%%%%%
\appendices
\vspace{1mm}
\section{Proof of Lemma \ref{LemmaWachievable}}\label{ApxLemmaWachievable}
\begin{proof}
Suppose that there is a sequence $s_j$, $j=1,..., q $ satisfying $\ell \leq  s_j \leq r $. Now Consider the following definition: $ s' \triangleq \frac{1}{q} \sum_{j=1}^q s_j = \sum_{j=1}^q \frac{1}{q} \cdot s_j $
%\begin{align}\label{eqnLemmaWCovexSum}
%  s' \triangleq \frac{1}{q} \sum_{j=1}^q s_j = \sum_{j=1}^q \frac{1}{q} \cdot s_j
%\end{align}
%
(Clearly $ \ell \leq  s' \leq r$).
%In the sequel instead of using $\sum_{j=1}^q s_j$, we will simply use its equivalent value: $ q \cdot s' $.
%Note that $ \sum_{j=1}^q \frac{1}{q} =1 $, so the summation in Equation (\ref{eqnLemmaWCovexSum}) is a convex combination of $s_j$'s. Therefore $s'$ also satisfies the condition $ \ell \leq  s' \leq h$. So in continue instead of using $\sum_{j=1}^q s_j$ we can simply use its equivalent value: $ q \cdot s' $.\\
Then $w$ is achievable if and only if there exist an integer value $q$ and a real value $s'$ such that $ \ell \leq s' \leq r$ and $w=q.s'$. So we have:
\begin{align}
s' = \dfrac{w}{q}  &\Longleftrightarrow  \ell \leq \dfrac{w}{q} \leq r \nonumber \\
 \Longleftrightarrow  \dfrac{w}{r} \leq q \leq \dfrac{w}{\ell} 
 &\Longleftrightarrow  \left\lceil \dfrac{w}{r} \right\rceil \leq  \dfrac{w}{\ell} \nonumber
\end{align}
Therefore the sequences $s_j$, $j=1,..., q $ satisfying $\ell \leq  s_j \leq r $ exist if and only if $ \lceil \frac{w}{r} \rceil \leq  \frac{w}{\ell}$.
\end{proof}
%%%%%%%%%%%%%%%%%%%%%%%%%%%%%%%%%%%%%%%%%%%%%%%%%%%%%%%%%%%%%%%%%%%%%%%%%%%%%%%%%%%%%%%%%%%%%%%%%%%%%%%%%%%%%%%%%%%%%%%%%%%%%%%%%%%%%%%%%%%%%%%%%%%%%%%%%%%%%%%%%%%%%
\vspace{2mm}
\section{Proof of Lemma \ref{LemmaWclosest}}\label{ApxLemmaWclosest}
\begin{proof}
Instead of using $\sum_{i=1}^q s_i$, we just use its equivalent value $ q \cdot s' $, where $s' = \frac{1}{q} \sum_{i=1}^q s_i $. 
Suppose that there exists a value $v=q \cdot s'$ such that  $w^* < v < w $ which means $q \cdot s' >  h \cdot \lfloor \frac{w}{h} \rfloor $(Note that $q \cdot s' \neq w$ because $w$ is not achievable).  We know that: $s' \leq h $. Combining these two inequality results in: $ q> \lfloor \frac{w}{h} \rfloor$, which implies that:
\begin{align}\label{eqnW<1}
\left\lceil \dfrac{w}{h} \right\rceil \leq q
\end{align}
On the other hand $ \ell \leq s' $, so we have $ q \cdot \ell \leq q \cdot s' <w $ which results in:
\begin{align}\label{eqnW<2}
q < \dfrac{w}{\ell}
\end{align}

Combing the inequalities (\ref{eqnW<1}) and (\ref{eqnW<2}) results in $ \lceil \frac{w}{h} \rceil <  \frac{w}{\ell} $, which contradicts the fact that $w$ is not achievable ( Lemma \ref{LemmaWachievable}). Therefore the closest point to $w$ which is also smaller than $w$ is $w^* = h \cdot \lfloor \frac{w}{h} \rfloor  $.
%
%Similarly for the second part suppose that there exists a value $y=q \cdot s'$ such that $w < y <\tilde{w}$ which means: $ q \cdot s' < \ell \cdot \lceil \frac{w}{\ell} \rceil$.( $q \cdot s' \neq w$ because $w$ is not achievable).  We know that: $\ell \leq s' $, Combining these two inequality results in: $q < \lceil \frac{w}{\ell} \rceil $, which implies that: 
%\begin{align}\label{eqnW>1}
%q \leq  \dfrac{w}{\ell}
%\end{align}
%%
%On the other hand $s' \leq h   $, so we have $ w < q \cdot s' \leq q.h  $ which results in:
%\begin{align}\label{eqnW>2}
% \dfrac{w}{h} < q  \hspace{0.5cm} &\Longrightarrow \hspace{0.5cm}\left\lceil \dfrac{w}{h} \right\rceil \leq q
%\end{align}
%
%Combing the inequalities (\ref{eqnW>1}) and (\ref{eqnW>2}) results in $ \lceil \frac{w}{h} \rceil \leq  \frac{w}{\ell} $, which is again in contradiction with the fact that $w$ is not achievable.
% Therefore the closest point to $w$ which is also greater that $w$ is $\tilde{w} = \ell \cdot \lceil \frac{w}{\ell} \rceil $.
\end{proof}
%%%%%%%%%%%%%%%%%%%%%%%%%%%%%%%%%%%%%%%%%%%%%%%%%%%%%%%%%%%%%%%%%%%%%%%%%%%%%%%%%%%%%%%%%%%%%%%%%%%%%%%%%%%%%%%%%%%%%%%%%%%%%%%%%%%%%%%%%%%%%%%%%%%%%%%%%%%%%%%%%%%%%
\vspace{2mm}
\section{Proof of Lemma \ref{LemmaFillingOpt}}\label{ApxLemmaFillingOpt}
\begin{proof}
To prove Lemma \ref{LemmaFillingOpt} we will show that for any FSP policy $\theta$ with $P_{max}^{\theta}$, there exists a filling $\theta^*$ with $P_{max}^{\theta^*}$ such that $P_{max}^{\theta^*} \le P_{max}^{\theta}$.
In $\theta$, first consider an interval $\mathcal{T}=[t_1,t_2] $ such that $ P(t)= P_{max}^{\theta} ~~\forall t \in \mathcal{T} $. Now suppose there is a gap in $ \mathcal{T}$ such that the length of this gap is more than the length of a narrow rectangle completely placed in $\mathcal{T}$. Then this rectangle can be put in that gap while keeping the $ P(t)= P_{max}^{\theta}$ in $ \mathcal{T}$. Therefore we assume that 
%in $\theta$ there is no such intervals 
$ \mathcal{T}$ doesn't contain this kind of gaps. 
So in $\theta$ in every interval $\mathcal{T} $ such that $ P(t)= P_{max}^{\theta} ~~\forall t \in \mathcal{T} $ any rearrangement of the narrow rectangles 
%within each of the rows will 
does not increase $P_{max}^{\theta}$. Now we try to pick a rectangle scheduled in the interval $\mathcal{T} $ (partly or completely) and put it in an \emph{unfilled} row. 
%As we mentioned before this rescheduling will not increase the $P_{max} $. We will continue this procedure until every row becomes filled. It gives us a \emph{filling} $ \theta^*$ of narrow rectangle with $P_{max}=P_{max}^{\theta^*} \le P_{max}^{\theta} $. Therefore $P_{opt}^{F} = \min _{\theta^* \in \Theta^*} {P_{max}^{\theta^*}}$. 
This results in a new scheduling $\theta' $ with $P_{max}^{\theta'} \le  P_{max}^{\theta}$. Now we repeat this procedure for $\theta'$ and continue doing it until we get a scheduling $\theta^* $ in which every row becomes filled (maybe except for the last row), so $\theta^* $ is a filling with $P_{max}^{\theta^*} \le  P_{max}^{\theta}$.
\end{proof}
%%%%%%%%%%%%%%%%%%%%%%%%%%%%%%%%%%%%%%%%%%%%%%
\begin{figure}[tp]
\centering
\includegraphics[width=.50 \textwidth ] {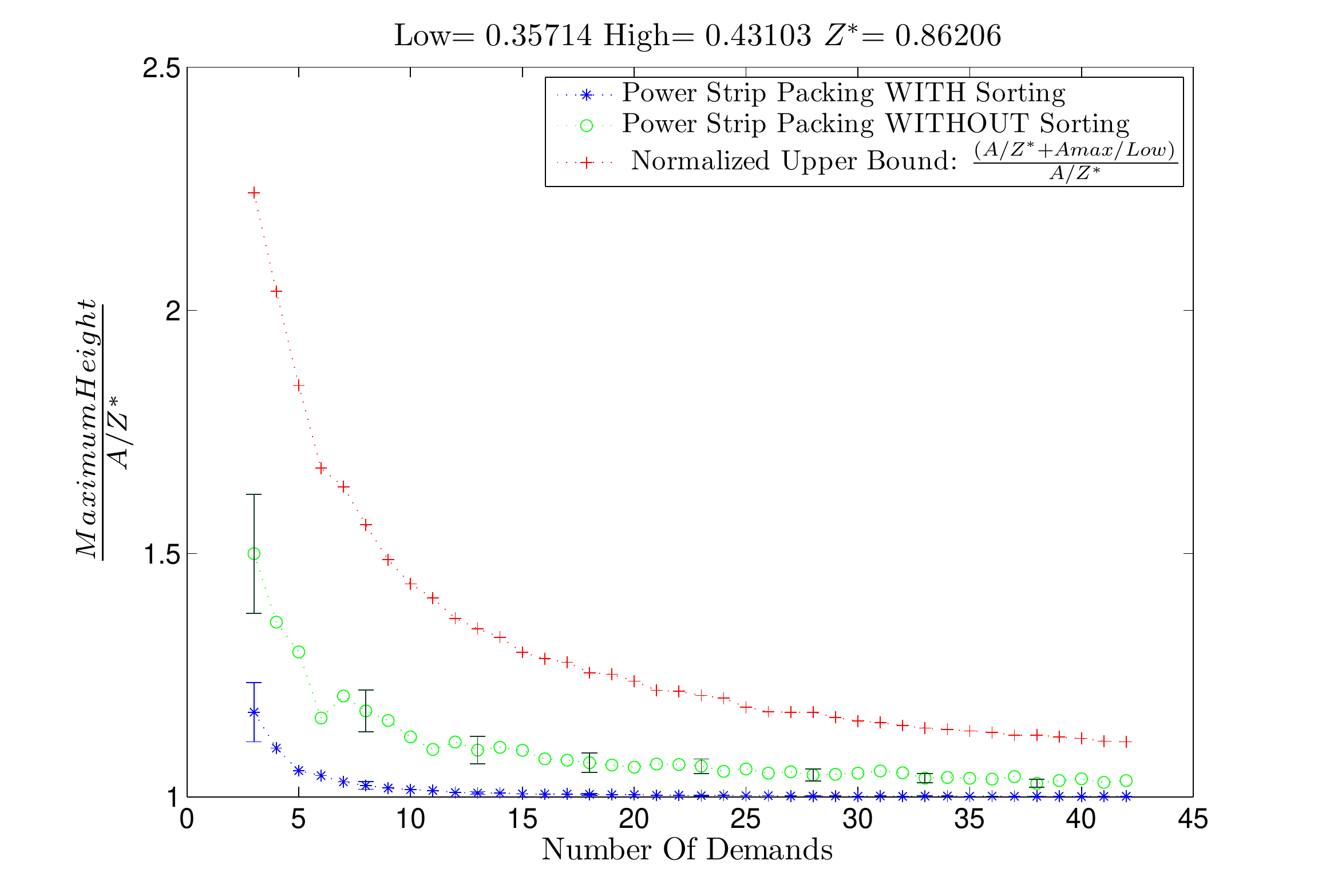}%\vspace{-1mm}
\caption{Comparing performances of the algorithms for Non-Ideal cases with  $ \frac{A}{Z^*} + \frac{A_{max}}{\ell} $ for different number of demands, where $\ell=0.3571 $ and $ r=0.43103  $ and $A_i$'s are  independent and identically uniformly distributed in the interval $ [0,\ell] $.}%\vspace{-1mm}
\label{fig:AlgNonIdeal}
\end{figure}
%%%%%%%%%%%%%%%%%%%%%%%%%%%%%%%%%%%%%%%%%%%%%%
\vspace{2mm}
\section{Proof of Lemma \ref{LemmaKrectangles}}\label{ApxLemmaKrectangles}
\begin{proof}
With respect to the conditions $ 0 < \ell \leq r < 1 $ and $\lceil \frac{1}{r} \rceil >  \frac{1}{\ell} $, we conclude that $t=1$ is not achievable. Now define $K_0 = \lfloor \frac{1}{r} \rfloor = \lfloor \frac{1}{\ell} \rfloor $.
%, and $ K_0 \cdot \ell = \left\lfloor \dfrac{1}{\ell} \right\rfloor \cdot \ell <1 $. 
So we have: $(K_0 + 1)= \lceil \frac{1}{\ell} \rceil > \frac{1}{\ell} $, which means: $ (K_0 + 1) \cdot \ell >1 $ and hence: $1- K_0 \cdot \ell < \ell$. 
The last inequality shows that 
%is an achievable value but 
%with respect to inequality (\ref{eqnLemmaK_l}) 
there is no space to add even the smallest narrow rectangle (i.e. $s_i=\ell$) to the point $ K_0 \cdot \ell \le 1 $ and since $\ell \leq s_i $, $K_0$ is the largest number of rectangles which can fill a row and don't exceed $t=1$. 
%
%\begin{align}\label{eqnLemmaK_l}
%$$(K_0 + 1)= \left\lceil \dfrac{1}{\ell} \right\rceil > \dfrac{1}{\ell} \Longrightarrow (K_0 + 1) \cdot \ell >1 \Longrightarrow 1- K_0 \cdot \ell < \ell $$
%\end{align}
%
On the other hand from Lemma \ref{LemmaWclosest}, $Z^* = K_0 \cdot r $ is the largest achievable value in $[0,1]$, i.e. $K_0 r \le 1 $ and $1- K_0 r \le \ell $. since $s_i \leq r$, $K_0$ is the smallest number of rectangles which can fill a row such that there is no space to add a new rectangle to this row.
Therefore $K_0 $ is the exact number of rectangles in each filled row.

\end{proof}
%%%%%%%%%%%%%%%%%%%%%%%%%%%%%%%%%%%%%%%%%%%%%%%%%%%%%%%%%%%%%%%%%%%%%%%%%%%%%%%%%%%%%%%%%%%%%%%%%%%%%%%%%%%%%%%%%%%%%%%%%%%%%%%%%%%%%%%%%%%%%%%%%%%%%%%%%%%%%%%%%%%%%
\begin{figure}[tp]
\centering
\includegraphics[width=.50 \textwidth ] {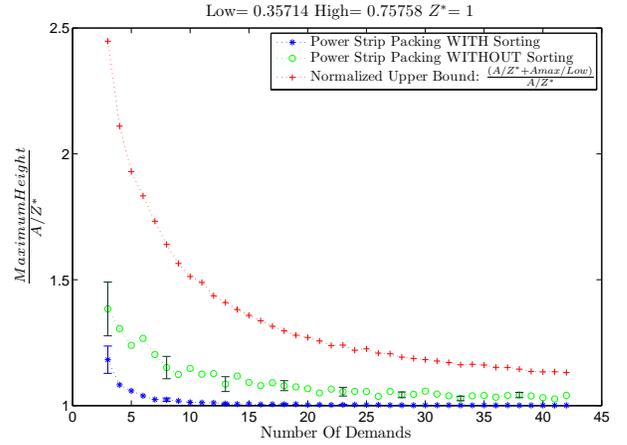}%\vspace{-1mm}
\caption{Comparing performances of the algorithms for Non-Ideal cases with  $ \frac{A}{Z^*} + \frac{A_{max}}{\ell} $ for different number of demands, where $\ell=0.35714 $ and $ r=0.75758  $ and $A_i$'s are  independent and identically uniformly distributed in the interval $ [0,\ell] $.}%\vspace{-3mm}
\label{fig:AlgNearIdeal}
\end{figure}

%%%%%%%%%%%%%%%%%%%%%%%%%%%%%%%%%%%%%%%%%%%%%%%%%%%%%%%%%%%%%%%%%%%%%%%%%%%%%%%%%%%%%%%%%%%%%%%%%%%%%%%%%%%%%%%%%%%%%%%%%%%%%%%%%%%%%%%%%%%%%%%%%%%%%%%%%%%%%%%%%%%%%
%\vspace{-1mm}
\bibliographystyle{IEEEtran}	% (uses file "plain.bst")
\bibliography{refs}		% expects file "myrefs.bib"
%\bibliography{"C:/MMK/Dropbox/SmartGrid/LastRef/refs"}		% expects file "myrefs.bib"
%\bibliography{refs}		% expects file "myrefs.bib"

\end{document}